\titleformat{\section}{\Large\bfseries}{}{0.3em}{}
\titleformat{\subsection}{\normalsize\bfseries}{}{0em}{}
\newtheorem{theorem}{\textbf{Theorem}}
\newtheorem{proposition}{\textbf{Proposition}}
\newtheorem{lemma}{\textbf{Lemma}}
\newtheorem{corollary}{\textbf{Corollary}} 
\renewcommand{\eqref}[1]{Eq. (\ref{#1})}
\title{Exact first passage time distribution for second-order reactions in chemical networks}
\author{Changqian Rao$^{1,2}$
}
\author{David Waxman$^{3,4,5,6}$ 
}
 \author{Wei Lin$^{1,2,3}$
}
 \author{Zhuoyi Song\textsuperscript{*} $^{3,4,5,6}$
}
 \email{songzhuoyi@fudan.edu.cn}
  \address{ $^1$ School of Mathematical Sciences, Fudan University, Shanghai, 200433, China}
  \address{$^2$ 
Research Institute of Intelligent Complex Systems, Fudan University, Shanghai, 200433, China
}
  \address{ $^3$
Institute of Science and Technology for Brain-Inspired Intelligence, Fudan University, 
Shanghai, 200433, China
}
  \address{ $^4$
Key Laboratory of Computational Neuroscience and Brain-Inspired Intelligence (Fudan University), Ministry of Education, China
}
  \address{ $^5$
MOE Frontiers Center for Brain Science, Fudan University, Shanghai 200433, China
}
  \address{ $^6$
Zhangjiang Fudan International Innovation Center, Shanghai, China
}
\begin{document}

\date{\today}

\begin{abstract}

The first passage time (FPT) is a generic measure that quantifies when a random quantity reaches a specific state. We consider the FTP distribution in nonlinear stochastic biochemical networks, where obtaining exact solutions of the distribution is a challenging problem. Even simple two-particle collisions cause strong nonlinearities that hinder the theoretical determination of the full FPT distribution. Previous research has either focused on analyzing the mean FPT, which provides limited information about a system, or has considered time-consuming stochastic simulations that do not clearly expose causal relationships between parameters and the system's dynamics. This paper presents the first exact theoretical solution of the full FPT distribution in a broad class of chemical reaction networks involving $A + B \rightarrow C$ type of second-order reactions. Our exact theoretical method outperforms stochastic simulations, in terms of computational efficiency, and deviates from approximate analytical solutions. Given the prevalence of bimolecular reactions in biochemical systems, our approach has the potential to enhance the understanding of real-world biochemical processes.

\end{abstract}

\keywords{first passage time distribution; second-order chemical reactions; biochemical networks; chemical master equations, exact solutions} 

\maketitle


\section{Introduction}

The first passage time (FPT) is a fundamental concept that is used to analyze the behavior and dynamics of stochastic processes\cite{benichou2011intermittent,metzler2014first,redner2001guide}. In biochemical reaction networks, the FPT is a key quantity that refers to the time when a specific event or state first occurs within the network\cite{bressloff2014stochastic,iyer2016first}.
Examples of the specific event include reaction completion\cite{bel2009simplicity}, binding or unbinding events\cite{newby2016first}, protein translocation\cite{RevModPhys.85.135}, and state transitions\cite{dai2015first}. Analyzing the FPT for these events is not just a theoretical exercise; the FPT can provide detailed insight into the timing, efficiency, and reliability of the underlying biochemical reaction system\cite{ham2024stochastic,polizzi2016mean}. This insight is crucial because it enables not only the understanding of regulatory mechanisms\cite{huang2021relating}, but also their manipulation\cite{ghusinga2017first}, thereby opening up new possibilities in biochemistry and chemical kinetics.  

When counts of the constituent molecules are low, stochasticity and discreteness are inescapable features of chemical kinetics\cite{raj2008nature}. In this context, the stochastic properties of the FPT require a characterization in terms of its probability distribution. However, past theoretical work has primarily focused on deriving the mean FPT and the global FPT\cite{benichou2010geometry,condamin2007first}. This is because the FPT distribution is hard to measure experimentally; the stochastic timings are disguised in cell population measurements due to cell-cell variabilities, and precisions in single-cell measurements are limited by experimental technologies\cite{iyer2014scaling,wang2010robust}. Attention has shifted recently to focus on obtaining the full FPT distribution beyond the mean\cite{godec2016first, woods2024analysis}. This distribution provides much more information about the underlying biochemical system. For example, Thorneywork et al. demonstrate that a purely dynamic measurement of the full FPT distribution uncovers that a short-time, power-law regime of the distribution, rather than the mean FPT, reflects the number of intermediate states in an underlying potential energy landscape\cite{thorneywork2020direct}. Therefore, mathematical methodologies for estimating and analyzing the FPT distribution are highly desirable, and complement advances in measurement technologies of the FPT distribution in biochemical systems.

Traditionally, estimating the FPT distributions depends on solving the underlying chemical master equations (CMEs)\cite{iyer2016first}, which are the primary modelling approach of stochastic biochemical systems\cite{gillespie2013perspective,iyer2016first}. Typically, CME solutions can be simulated\cite{cao2006efficient}, solved approximately\cite{smadbeck2013closure,cao_linear_2018}, or solved exactly\cite{ishida1964stochastic, mcquarrie1964kinetics}. Simulation approaches, such as the Gillespie algorithm\cite{gillespie2013perspective}, approximate the solutions of the CME by generating many realizations of the associated Markov process. Such simulations trade time efficiency for accuracy, since producing tens of thousands of sample paths, to expose the underlying distribution, takes time\cite{rao2023analytical,jia2024holimap}. 

Approximate methods were developed to solve CMEs, trading off estimation accuracy for time efficiency\cite{smadbeck2013closure,cao_linear_2018}. There are two main classes of such methods: (i) closure schemes\cite{smadbeck2013closure} and (ii) linear mapping methods\cite{cao_linear_2018}. Under closure schemes, the solutions to CMEs are obtained by approximating higher-order moments of the solution by nonlinear functions of lower-order moments, thereby leading to tractable equations. Under linear mapping, bimolecular reactions are approximated by first-order reactions, allowing simpler, solvable systems. However, even with such approximations, it is a formidable task to elucidate the causal mechanisms regulating FPT distributions, as exhaustive parameter searches are usually unrealistic, and the complex compensatory effects of parameter variations are difficult to clarify\cite{rao2023analytical}. 

Exact theoretical expressions for the FPT distributions are highly desired, as they may be essential to identifying and quantifying the causal regulatory mechanisms\cite{venugopal_modelling_nodate,moss_identification_2008,vizan_controlling_2013}. However, this depends on determining time-dependent solutions for the CMEs, which are only known for specific cases\cite{peccoud1995markovian,ramos2011exact}.To the best of our knowledge, general, exact time-dependent solutions of CMEs are known only for simple reaction systems - with zero and first-order reactions, i.e., \textit{linear systems}\cite{jahnke2007solving,smadbeck2013closure}.

Time-dependent solutions of CMEs are still challenging for general, non-linear, biochemical reaction networks\cite{anderson_time-dependent_2020}. Even widely presented bimolecular reactions, which constitute one of the simplest core building blocks of a biochemical reaction network, involve highly nonlinear models, making exact CME solutions non-tractable\cite{gillespie2013perspective}. As a result, for biochemical networks with second-order reactions, time-varying CME solutions are only available for specific cases, such as highly simplified systems with only a few states\cite{ishida1964stochastic, mcquarrie1964kinetics, renyi1954betrachtung, schnoerr2017approximation}, or reversible bimolecular reaction in which the transition matrix has a tridiagonal format\cite{laurenzi2000analytical,smith2015general}.  

General time-dependent CME solutions are still unknown for systems with bimolecular reactions\cite{gillespie2013perspective}. Matters are even more challenging if there is a large state space and if non-constant reaction rates are involved\cite{rao2023analytical,gonze_goodwin_2013}. Generally, second-order reactions can be grouped into two types: $A + A \rightarrow C$ and $A + B \rightarrow C$, and mathematical analysis treats these two types of second-order reactions separately\cite{ishida1964stochastic, mcquarrie1964kinetics, renyi1954betrachtung}. In earlier work, we derived an exact FPT distribution for a general class of biochemical networks involving an $A + A \rightarrow C$ second-order reaction downstream of two zero/first-order reactions\cite{rao2023analytical}. In the present work, we report the first exact distribution of the FPT for a general class of chemical reaction networks with an $A + B \rightarrow C$ second-order reaction that is downstream of various zero/first-order reactions.

We note that Anderson et al. have recently provided the first exact time-dependent distribution for a general class of reaction networks with higher-order complexes\cite{anderson_time-dependent_2020}. They find that the time-varying solutions of the CMEs will maintain a Poisson-product form. However, this result requires that: (i) the chemical reaction system initially has such a Poisson-product form, and (ii) the system has to satisfy a dynamical and restricted (DR) condition. This DR condition requires that, for any higher-order reactant pairs, the production and consumption are \textit{balanced}. Generally, the DR condition is very restrictive, and only applies to specific reaction formats with specific kinetic rates and initial conditions. In this work, we present results that are not subject to the DR condition - by allowing the mean of molecular numbers to follow a stochastic process, as described by a stochastic differential equation (SDE) 
rather than an ordinary differential equation (ODE).

Our exact analytical results are not only novel and efficient, but also highly practical. They are much more time efficient than stochastic simulations, and are much more accurate than traditional approximation methods, such as linear mapping methods (LMA)\cite{cao_linear_2018}. As we demonstrate, our methods have applications in diverse chemical reaction networks from different areas. These include gene regulatory networks\cite{jia2024holimap} and multi-step transition models\cite{huang2021relating}. Our results thus have broad applicability and real-world significance. 

\section{Problem formulation}\label{sec:pro_formu}

We consider a biochemical system with $N$ chemical species $\bm{S} = [S_1,S_2,...,S_N]^{\top}$ whose molecules can undergo $M+1$ chemical reactions, $R_m$, with $m=0,1,2,3,....,M$. There are $M$ zero or first-order reactions followed by one second-order reaction, which is of the type  $A + B \rightarrow C$. 

We label the second-order reaction by $m=0$ and the other reactions (zero- or first-order) by $m=1,2,...M$. The molecules $A$ and $B$ are the reactants of the second-order reaction, which can be any two different molecular species. We denote these two reactants as $S_1$ and $S_2$ without loss of generality. 

We shall use the following general notation for such a biochemical system: 

\begin{equation}
\begin{split}\label{Cas1}
\bm{y}_m\cdot \bm{S}\overset{a_{m}(t)}{\rightarrow}& \bm{y'}_m\cdot \bm{S},\\
S_1+S_2\overset{a_{0}(t)}{\rightarrow}&*
\end{split}
\end{equation}
where $m= 1,\dots,M$ label the $m$'th zero or first-order reaction, while $\bm{y}_m$ and $\bm{y}_m'$ are the stoichiometric coefficient vectors of the $m$'th reaction. 
The reaction rate 'constants' are written $a_{m}(t)$ and $a_{0}(t)$, and we have assumed these vary with the time, $t$. The resulting analysis thus includes a broad class of non-linear time-varying biochemical systems with second-order reactions.

Because, in the above biochemical system, the first $m$ reactions are zero or first-order, we require that the sum of the absolute values of all components of $\bm{y}_m$, and separately of $\bm{y}_m'$, should not exceed one; we write these conditions as $\|\bm{y}_m\|_1\leq 1$ and $\| \bm{y}'_m\|_1\leq 1$, respectively.

To begin a derivation of the exact FPT distribution of the second-order reaction, we first define it mathematically. We adopt a standard approach by adding an additional species $S_0$, the number of which counts the time of occurrence of the second-order reaction\cite{iyer2016first}. We denote the number of $S_0$ molecules by $x_0 \equiv x_0(t)$, then $x_0$ is a non-decreasing function of $t$. Thus, we modify the system in \eqref{Cas1} to: 
\begin{equation}
\begin{split}\label{apb_reac}
\bm{y}_m\cdot \bm{S}\overset{a_{m}(t)}{\rightarrow}& \bm{y'}_m\cdot \bm{S},\\
S_1+S_2\overset{a_{0}(t)}{\rightarrow}&S_0.
\end{split}
\end{equation}
If at any time $t$ the second order reaction has not occurred, we have $x_0(t)=0$, and
equivalently the FPT exceeds $t$. Therefore, the complementary cumulative probability distribution of the FPT can be represented as: 

\begin{equation}
\begin{split}\label{fpt_aux}
{\rm P}(FPT>t) = P(x_0(t) = 0). \\
\end{split}
\end{equation}

\section{Result}
\subsection{Analytical representation for the auxiliary chemical master equation}

For the system given in \eqref{apb_reac}, we adopt the following notation:

\begin{enumerate}
    \item $\mathbf{X}=\left[\begin{array}{c}
         x_{0}  \\
         \mathbf{x} 
    \end{array}\right]$ denotes the complete state vector of the system, representing the numbers of different species of $\bm{S}^*=\left[\begin{array}{c}
         S_0  \\
         \bm{S} 
    \end{array}\right]$. The quantity $\mathbf{x}=[x_{1},\cdots,x_{N}]^{\top}$ denotes the numbers of different species of $\bm{S}$.

\item $\bm{Y}_m=\left[
         y_{m,0},
         \bm{y}_m 
    \right]$ and $\bm{Y'}_m=\left[
         y'_{m,0},  
         \bm{y'}_m 
    \right]$ are stoichiometric coefficient vectors of the $m$'th reaction in system \eqref{apb_reac}, and $\bm{y}_m=[y_{m,1},y_{m,2},\dots,y_{m,N}]$, $\bm{y'}_m=[y'_{m,1},y'_{m,2},\dots,y'_{m,N}]$ denote parts of stoichiometric coefficient vectors, excluding $y_{m,0}$ and $y'_{m,0}$ for the auxiliary species $S_0$.

\item $a_{0}(t)$ denotes the rate `constant' at time $t$ of the second-order reaction,
while $a_{m}(t)$ ($m=1,\dots,M$) denotes the rate `constant' at time $t$ of the $m$'th reaction
 (which is either a zero or first-order reaction).

\end{enumerate}

The CMEs for the system given in \eqref{apb_reac} correspond to a differential equation for the probability distribution of $\bm{X}$ at time $t$\cite{gillespie2013perspective}: 

\begin{small}
\begin{equation}
\label{general_cme_element}
\displaystyle\frac{{\rm d} P(\bm{X}, t )}{{\rm d} t} \! =\!\sum\limits_{k=0}^{M}\left[ P\left(\bm{X}-\bm{v}_k, t \right) c_k\left(\bm{X}-\bm{v}_k,t\right)\!-\! P(\bm{X}, t ) c_k(\bm{X},t)\right],
\end{equation}
\end{small}
where $\bm{v}_k$ is the transition vector for the $k$'th reaction, and $c_k(\bm{X},t)$ is the propensity function for the $k$'th reaction, i.e., the probability that the $k$'th reaction occurs in state $\bm{X}$, at time $t$. The 
quantity $c_k(\bm{X},t)$ equals $a_k(t)\bm{X}^{\bm{Y}_k}$, where, for any vectors with $d$ components, a vector to the power of a
vector is defined by $\mathbf{u}^{\mathbf{v}}\overset{\rm def}{=}\prod^d_{i=1}u_i^{v_i}$ and we adopt the convention that $0^0=1$. Therefore, for $k=0$, the reaction is given in \eqref{apb_reac}, thus, $c_0(\bm{X},t)=a_0(t)x_1x_2$. For $k=1,\dots, M$, the reactions are first or lower-order, and thus $c_k(\bm{X},t)$, $k=1,\dots, M$ are linear functions of states.

We next present a theorem that gives the exact solution of \eqref{general_cme_element} 
in terms of the following variables/notation:

\begin{enumerate}[label=(\roman*)]

\item $\boldsymbol{\lambda}(t)=[\lambda_{1}(t),\dots
,\lambda_{N}(t)]^\top$ is a column vector containing the mean of all numbers of the species in $\bm{S}$ at
any time $t$ ($t\geq0$), and $\boldsymbol{\Lambda}(t)= \left[\begin{array}{c}
     \lambda_{0}(t)  \\
     \boldsymbol{\lambda}(t) 
\end{array}  \right]$ is a column vector containing the mean of all numbers of the species in $\bm{S^*}$ at
any time $t$ ($t\geq0$).

\item 
$\boldsymbol{\Lambda}(0)= \left[\begin{array}{c}
     \lambda_{0}(0)  \\
     \boldsymbol{\lambda}(0) 
\end{array}  \right]$denotes the initial mean number of each species. 

\item $\mathcal{M}_{1}$ denotes an $N\times N$ matrix with components
$\eta_{i,j}^{1}$, and $\mathcal{M}_{2}$ denotes an $N\times1$ vector with
components $\eta_{i}^{2}$, where the components are given by:
\begin{equation}%
\eta_{i,j}^{1}=\sum_{m:\bm{y}_{m}=e_{i}}a_{m}(t)(y_{m,j}^{\prime}%
-y_{m,j}),\quad\eta_{i}^{2}=\sum_{m:\bm{y}_{m}=\mathbf{0}}a_{m}%
(t)(y_{m,i}^{\prime}-y_{m,i}),
\end{equation}
in which $e_i$ is a vector where only element $i$ is $1$, and all other elements
are $0$, while $y_{m,i}^{\prime}$ and $y_{m,i}$ are the $i$'th component
of $\bm{y}_{m}^{\prime}$ and $\bm{y}_{m}$, respectively,

\item $\mathcal{N}_{1}$ and $\mathcal{N}_{2}$ denote $N\times N$ matrices with
all elements zero, except the upper left $2\times2$ block, with
\begin{equation}\label{n1n2_def}
\mathcal{N}_{1}=\left(
\begin{array}
[c]{ccc}%
a_{S} & 0 & \cdots\\
0 & -a_{S} & \cdots\\
\vdots & \vdots & \ddots
\end{array}
\right)  ,\quad\mathcal{N}_{2}=\left(
\begin{array}
[c]{ccc}%
\mathrm{i}a_{S} & 0 & \cdots\\
0 & \mathrm{i}a_{S} & \cdots\\
\vdots & \vdots & \ddots
\end{array}
\right),
\end{equation}
where 
$a_S=\sqrt{2a_0}/2$. 
\end{enumerate}

\vspace{0.5em}

\noindent The theorem reads as follows.

\begin{theorem}\label{theo1}

For the system in \eqref{apb_reac}, providing:
\begin{enumerate}
    \item the variables $\boldsymbol{\lambda}$ and $\lambda_{S}$ obey the stochastic 
differential equations (SDEs): 
\begin{equation}
\begin{split}\label{sde_formu1}
{\rm d}\boldsymbol{\lambda}&=\left(\mathcal{M}_1\boldsymbol{\lambda}+\mathcal{M}_2\right){\rm d}t + \mathcal{N}_1\boldsymbol{\lambda}{\rm d}W^1_t+\mathcal{N}_2\boldsymbol{\lambda}{\rm d}W^2_t,\\
{\rm d}\lambda_{S}&= \left( a_S \lambda_1-a_S \lambda_2 \right){\rm d}W^1_t + \left( {\rm i}a_S \lambda_1+{\rm i}a_S \lambda_2 \right){\rm d}W^2_t,
\end{split}
\end{equation}
subject to $\lambda_S(0)=0$, and some initial condition $ \boldsymbol{\lambda}(0)$,
    \item the new variable $\lambda_{0}$ obeys the SDE: 
\begin{equation}
\begin{split}\label{sde_formu2}
{\rm d}\lambda_{0}=a_0\lambda_1\lambda_2{\rm d}t,
\end{split}
\end{equation}
subject to $\lambda_{0}(0)=0$,
\end{enumerate}
then if the initial condition of \eqref{general_cme_element} is a distribution of the Poisson-product form: 
\begin{equation}
\begin{split}\label{sde_ini}
{\rm P}(\textbf{X},0)&=\frac{ \boldsymbol {\Lambda}(0)^{\textbf{X}}}{\textbf{X} !}\exp(-\boldsymbol {\lambda}(0))=\prod_{i=0}^N \frac{ \lambda_{i}(0)^{x_i}}{x_i !} \exp(-\lambda_{i}(0)),
\end{split}
\end{equation}
the solution of \eqref{general_cme_element} is
\begin{equation}
\begin{split}\label{cme_solu}
P(\textbf{X},t)=\left<\frac{\boldsymbol{\Lambda}(t)^{\textbf{X}}}{\textbf{X}!}\exp(-\boldsymbol{\Lambda}(t))\exp(\lambda_0(t)+\lambda_{S}(t))\right>,\\
\end{split}
\end{equation}
 where $<...>$ denotes an expectation operation.

\end{theorem}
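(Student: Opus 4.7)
The plan is to verify the ansatz by direct substitution. Write the proposed solution as $P(\mathbf{X},t)=\langle g(\mathbf{X},t)\rangle$ with
\[
g(\mathbf{X},t)=\frac{\boldsymbol{\Lambda}(t)^{\mathbf{X}}}{\mathbf{X}!}\exp(-\boldsymbol{\Lambda}(t))\exp(\lambda_0(t)+\lambda_S(t)),
\]
and apply It\^o's formula to $g$ in the stochastic variables $(\boldsymbol{\lambda},\lambda_0,\lambda_S)$. The martingale contributions vanish under $\langle\cdot\rangle$, so $\tfrac{d}{dt}\langle g\rangle=\langle\mathcal{L}g\rangle$, where $\mathcal{L}$ collects every deterministic drift from \eqref{sde_formu1}--\eqref{sde_formu2} together with the It\^o corrections from the quadratic covariations. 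The aim is then to show, pointwise in the SDE variables, that $\mathcal{L}g$ matches the right-hand side of \eqref{general_cme_element} with $P$ replaced by $g$; taking expectations is then legitimate because each $c_k(\mathbf{X},t)$ is deterministic given $\mathbf{X}$ and $t$. The initial condition is immediate: $\lambda_0(0)=\lambda_S(0)=0$ forces the exponential prefactor to $1$, recovering the Poisson product in \eqref{sde_ini}.

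I would then split the right-hand side of \eqref{general_cme_element} into the $M$ linear contributions and the second-order contribution. For the linear reactions, the shift operators in the CME are converted into multiplicative operators via the Poisson-factor identities $g|_{x_i\to x_i-1}=(x_i/\lambda_i)\,g$ and $g|_{x_i\to x_i+1}=(\lambda_i/(x_i+1))\,g$. A direct algebraic manipulation---essentially the classical Jahnke--Huisinga calculation for monomolecular systems---then shows that $\sum_i(\partial g/\partial\lambda_i)(\mathcal{M}_1\boldsymbol{\lambda}+\mathcal{M}_2)_i$ reproduces $\sum_{m=1}^M\bigl[g|_{\mathbf{X}-\bm{v}_m}c_m(\mathbf{X}-\bm{v}_m,t)-g\,c_m(\mathbf{X},t)\bigr]$ term by term; the entries $\eta^1_{i,j}$ and $\eta^2_i$ are engineered precisely to close this bookkeeping under the hypothesis $\|\bm{y}_m\|_1\le 1$.

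The heart of the proof is the nonlinear reaction $S_1+S_2\to S_0$. Its CME contribution simplifies to $a_0\bigl[(x_0/\lambda_0)\lambda_1\lambda_2-x_1x_2\bigr]g$, in which the factor $\exp(\lambda_0)$ of the ansatz cleanly cancels the $\exp(-\lambda_0)$ from the $\lambda_0$-Poisson weight, leaving only the ratio $x_0/\lambda_0$. The positive piece is produced by the $\lambda_0$-drift in \eqref{sde_formu2}: $(\partial g/\partial\lambda_0)\cdot a_0\lambda_1\lambda_2=(x_0/\lambda_0)g\cdot a_0\lambda_1\lambda_2$. The negative piece $-a_0 x_1x_2\,g$ must come from the It\^o corrections. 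With $a_S^2=a_0/2$ and the complex coefficient $\mathrm{i}$ in $\mathcal{N}_2$, a direct computation gives $(d\lambda_1)^2=(d\lambda_2)^2=0$, $d\lambda_1\,d\lambda_2=d\lambda_1\,d\lambda_S=d\lambda_2\,d\lambda_S=-a_0\lambda_1\lambda_2\,dt$, and $(d\lambda_S)^2=-2a_0\lambda_1\lambda_2\,dt$; combined with $\partial g/\partial\lambda_S=g$ and $\partial g/\partial\lambda_i=(x_i/\lambda_i-1)g$ for $i=1,2$, the four nonzero It\^o corrections telescope---via the factorization $(u+1)(v+1)=(x_1/\lambda_1)(x_2/\lambda_2)$ with $u=x_1/\lambda_1-1$, $v=x_2/\lambda_2-1$---to exactly $-a_0x_1x_2\,g$. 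The main conceptual obstacle is recognising \emph{why} the unusual complex-valued diffusion matrices $\mathcal{N}_1,\mathcal{N}_2$ are the correct choice: they are designed so that the diagonal covariations vanish (preventing any It\^o deformation of the individual Poisson marginals for $S_1$ and $S_2$) while the three off-diagonal covariations involving the auxiliary ``phase'' variable $\lambda_S$ conspire, through that factorization identity, to reproduce precisely the bilinear compensator $a_0x_1x_2$. A secondary technical point is the interchange of $\tfrac{d}{dt}$ and $\langle\cdot\rangle$, which follows from standard moment estimates on \eqref{sde_formu1}--\eqref{sde_formu2} under reasonable regularity of the rate functions $a_m(t)$.
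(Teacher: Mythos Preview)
Your proposal is correct and follows essentially the same route as the paper's proof: apply It\^o's formula to the Poisson-product ansatz, match the drift plus It\^o-correction terms against the CME generator (splitting into the linear Jahnke--Huisinga part and the bilinear $S_1+S_2\to S_0$ part), and remove the stochastic-integral terms by taking expectations. The only place the paper is more explicit is in justifying that the complex-valued stochastic integrals have zero expectation: rather than invoking ``standard moment estimates'' directly, it localizes with the stopping times $t_n=\inf\{t:\|\boldsymbol{\lambda}(t)\|\ge n\}$, uses a second-moment bound (their Lemma~\ref{app:lem1}) together with Chebyshev to show $t_n\to\infty$, and then passes to the limit---you should expect to fill in that step in the same way.
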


Theorem \ref{theo1} is proved in Appendix A. 

Theorem \ref{theo1} gives an exact theoretical expression for the CME solution of a general class of nonlinear biochemical reaction networks with $A + B \rightarrow C$ type of second-order reactions. Compared to the previous analytical solutions of CMEs by Anderson et al.\cite{anderson_time-dependent_2020}, the key feature is that Theorem \ref{theo1} does not require the underlying system to satisfy a dynamically restricted (DR) condition, which considerably constrains the system's structure and parameters. However, the systems we analyze here, as represented by \eqref{general_cme_element}, are not subject to the DR condition. This indicates a broader applicability of Theorem \ref{theo1} than previous exact theoretical results in the literature. 

The key reason why Theorem \ref{theo1} can break the DR condition is that we allow the mean of each molecular species (contained in $\boldsymbol{\lambda}$ and $\lambda_{S}$), to follow stochastic processes, as described by the SDEs of \eqref{sde_formu1}), rather than being smoothly changing continuous variables, with no randomness, whose dynamics are governed by ordinary differential equations (ODEs). As a result, the time-dependent CME solution can be written as the average of a distribution over all \textbf{$\lambda$} paths. More details are given in Appendix \ref{apsec:der}.

Another significant aspect of Theorem \ref{theo1} is its applicability to time-varying systems where the reaction rates are \textit{not} constant. This is a particularly complex problem, as solving CMEs with time-varying rates is much more challenging than those with constant reaction rates\cite{blanes2009magnus,rao2023analytical}. To date, except for our earlier work that provides an exact FPT distribution for a specific type of nonlinear biochemical reaction network with an $A + A \rightarrow C$ type of second-order reactions\cite{rao2023analytical}, we have not encountered any work that offers exact theoretical results for FPTs with $A + B \rightarrow C$ type of second-order reactions and time-varying reaction rates. This suggests that Theorem \ref{theo1} will lead to new avenues of research in this area. 

\subsection{Analytical expression for the FPT distribution}\label{subsub2}

One approach to obtain the solution of Eq. (\ref{cme_solu}) is to calculate the expectation by simulation, i.e., by independently solving the SDEs (for $\boldsymbol{\lambda}$, $\lambda_{S}$ and $\lambda_{0}$) many times and then averaging. This may be very time-consuming. However, Theorem \ref{theo1} can lead to an \textit{exact} 
form of the FPT distribution, as given in \eqref{eq_coro1}, which can then be numerically evaluated.

\begin{corollary}\label{coro1}
The complementary cumulative probability distribution of the FPT
equals the probability of occurrence of states with $x_0=0$:
\begin{equation}\label{eq_coro1}
\begin{split}
P(FPT>t)&= \sum_{\textbf{x}|x_0=0}P(\textbf{X},t)
=\left<\exp(\lambda_{S})\right>.
\end{split}
\end{equation}
\end{corollary}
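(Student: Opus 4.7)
The plan is to combine the auxiliary-species reformulation with the explicit form of $P(\mathbf{X},t)$ delivered by Theorem \ref{theo1}, and then perform an elementary Poisson-tail calculation.

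First I would recall from Eq. (\ref{fpt_aux}) that $P(FPT>t)=P(x_0(t)=0)$, so the required probability is obtained by marginalising $P(\mathbf{X},t)$ over $\mathbf{x}=(x_1,\dots,x_N)$ with $x_0$ set equal to zero:
\begin{equation*}
P(FPT>t)=\sum_{\mathbf{x}\in\mathbb{Z}_{\geq 0}^{N}}P\bigl([0,\mathbf{x}^{\top}]^{\top},t\bigr).
\end{equation*}
Next I would substitute the Theorem \ref{theo1} expression for $P(\mathbf{X},t)$. Setting $x_0=0$ makes the factor $\lambda_0(t)^{x_0}/x_0!$ equal to $1$, and the factor $\exp(-\lambda_0(t))$ from $\exp(-\boldsymbol{\Lambda}(t))$ cancels exactly against the $\exp(\lambda_0(t))$ appearing inside the expectation. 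What remains inside the expectation is $\prod_{i=1}^{N}\lambda_i(t)^{x_i}/x_i!$ multiplied by $\exp\bigl(-\sum_{i=1}^{N}\lambda_i(t)\bigr)\exp(\lambda_S(t))$.

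I would then interchange the sum over $\mathbf{x}$ with the expectation $\langle\cdot\rangle$; since every summand has the Poisson-weight structure and non-negative modulus (with the $\exp(\lambda_S)$ factor carrying only a bounded contribution path-by-path), Tonelli/dominated convergence justifies this swap. The resulting sum factorises:
\begin{equation*}
\sum_{\mathbf{x}\in\mathbb{Z}_{\geq 0}^{N}}\prod_{i=1}^{N}\frac{\lambda_i(t)^{x_i}}{x_i!}=\prod_{i=1}^{N}\sum_{x_i=0}^{\infty}\frac{\lambda_i(t)^{x_i}}{x_i!}=\exp\!\Bigl(\sum_{i=1}^{N}\lambda_i(t)\Bigr),
\end{equation*}
which precisely cancels the remaining $\exp\bigl(-\sum_{i}\lambda_i(t)\bigr)$. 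After this cancellation only $\exp(\lambda_S(t))$ survives inside the expectation, giving $P(FPT>t)=\langle\exp(\lambda_S(t))\rangle$, as claimed.

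The main subtlety, and the only non-mechanical step, will be verifying the interchange of infinite sum and expectation. Because $\lambda_S(t)$ is driven by complex-coefficient noise through $\mathcal{N}_2$ in Eq. (\ref{n1n2_def}), the random variable $\exp(\lambda_S(t))$ is generally complex-valued, so the justification cannot rely on monotonicity alone. I would handle this by noting that the marginal probabilities $P([0,\mathbf{x}^{\top}]^{\top},t)$ are themselves non-negative real numbers whose total sum is bounded by $1$; the interchange can therefore be justified by writing the summand as the conditional Poisson probability of $\mathbf{x}$ given the random trajectory $(\boldsymbol{\lambda},\lambda_S)$ and applying Fubini to the resulting finite measure. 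Everything else is bookkeeping with the Poisson-product structure of Theorem \ref{theo1}.
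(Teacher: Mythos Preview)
Your proposal is correct and follows precisely the route the paper indicates: substitute the Theorem~\ref{theo1} formula for $P(\mathbf{X},t)$ into the marginal sum over states with $x_0=0$, let the Poisson weights in $\boldsymbol{\lambda}$ collapse via $\sum_{\mathbf{x}}\boldsymbol{\lambda}^{\mathbf{x}}/\mathbf{x}!=\exp(\sum_i\lambda_i)$, and observe the cancellations. The paper's own justification is a single sentence (``follows by substituting''), so your added care about the sum--expectation interchange in the presence of complex-valued $\lambda_S$ and $\lambda_i$ goes beyond what the authors supply rather than deviating from their argument.
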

Corollory \ref{coro1} follows by substituting $P(\textbf{x},x_0,t)$ into \eqref{cme_solu}. 

Equation (\ref{eq_coro1}) is a highly compact exact theoretical result. Next, we present numerical methods for calculating it. 

\vspace{1em}

\subsection{Numerical approximations for the FPT distribution} 

A direct approach to calculating $\left<\exp(\lambda_{S})\right>$ in \eqref{eq_coro1} would require solving an infinite set of coupled ODEs (see Appendix \ref{apsec:why} for details). To avoid such an issue, we shall introduce a numerical approximation method based on a Pad\'e approximant of  $\left<\exp(\lambda_{S})\right>$. This requires the calculation of moments of $\lambda_{S}(t)$. As we can show, the $n$'th moment, $<\lambda_{S}^n>$, has a closed-form expression. This follows because $<\lambda_{S}^n>$ is governed by a finite set of coupled ODEs, which results because  $\boldsymbol{\lambda}$ ( \eqref{sde_formu1}) and $\lambda_{S}$ ( \eqref{sde_formu2}) follow linear SDEs.

We proceed by constructing a function $H(s,t)=\left<\exp(s\lambda_{S}(t))\right>$; the quantity 
required for Eq. (\ref{eq_coro1}) is given by $H(1,t)$. 
To approximate $H(1,t)$, we first obtained a Pad\'e approximant of the function $H(s,t)$, denoted as $T(s,t)$, and then set $s=1$ to calculate $T(1,t)$.  

We determined a Pad\'e approximant of $H(s,t)$ using the following procedure\cite{baker1981morris}: 
\begin{enumerate}
\item 
Construct the Maclaurin series of $H(s,t)$ in $s$, truncated at the $\tilde{N}$'th
term, i.e. $T_{\tilde{N}}(s,t)= \sum_{n=0}^{\tilde{N}} \frac{s^n}{n!} \times \frac{\partial^n}{\partial s^n}H(s,t)|_{s=0} = \sum_{n=0}^{\tilde{N}}b_n(t)s^n$. The larger $\tilde{N}$ is, the better the Pad\'e approximate is, but at the cost of a more time-consuming algorithm. 

\item Calculate $b_n(t)$ by determining $<\lambda_{S}(t)^n>$; since $\frac{\partial^n}{\partial s^n}H(s,t)|_{s=0}= <\lambda_{S}(t)^n>$, $b_n(t)$ is known if $<\lambda_{S}(t)^n>$ is calculated. See 
below for details on how to calculate $<\lambda_{S}(t)^n>$.

\item Find the Pad\'e approximant by determining two polynomials $P^*_{\tilde{L}}(s,t)$ and $Q^*_{\tilde{N}-\tilde{L}}(s,t)$, such that the Maclaurin series of $P^*_{\tilde{L}}(s,t)/Q^*_{\tilde{N}-\tilde{L}}(s,t)$, truncated at the $\tilde{N}$'th term, equals $T_{\tilde{N}}(s,t)$. 

\item Equate the coefficients of $P^*_{\tilde{L}}(s,t)/Q^*_{\tilde{N}-\tilde{L}}(s,t)$ with that of the corresponding polynomial terms of $T_{\tilde{N}}(s,t)$, then $P^*_{\tilde{L}}(s,t)$ and $Q^*_{\tilde{N}-\tilde{L}}(s,t)$  can be uniquely set by solving $\tilde{N}+1$ algebraic equations. For example, if $P^*_{\tilde{L}}(s,t)=p_0(t)+p_1(t)s+p_2(t)s^2+\dots+p_{\tilde{L}}(t)s^{\tilde{L}}$, and %
if
$Q^*_{\tilde{N}-\tilde{L}}(s,t) = 1+q_1(t)s+q_2(t)s^2+\dots+q_{\tilde{N}-\tilde{L}}(t)s^{\tilde{N}-\tilde{L}}$, 
then the algebraic equations are:
\begin{equation}
\begin{split}\label{pade_eq}
b_0=p_0&\\
b_1+b_0q_1=p_1&\\
b_2+b_1q_1+b_0q_2=p_2&\\
\vdots\\
b_{\tilde{L}}+b_{\tilde{L}-1}q_1+\dots+b_0q_{\tilde{L}}=p_{\tilde{L}}&\\
b_{\tilde{L}+1}+b_{\tilde{L}}q_1+\dots+b_0q_{\tilde{L}-1}=0&\\
b_{\tilde{N}}+b_{\tilde{L}}q_1+\dots+b_0q_{2\tilde{L}-\tilde{N}}=0&,\\
\end{split}
\end{equation}
where $q_n=0$ for $n<0$ or $n>\tilde{N}-\tilde{L}$.

\end{enumerate}

\vspace{1em}

The key to obtaining a Pad\'e approximant of $H(s,t)$ is the determination of $<\lambda_{S}(t)^n>$. To achieve this, we adopt the following procedure\cite{press1992flannery}. 

\begin{enumerate}
    \item   
    Differentiate $\lambda_{S}(t)^n$, using Ito's rule: 
${\rm d}(\lambda_{S}^n)=n\lambda_{S}^{n-1}{\rm d}\lambda_{S}+ \frac{n(n-1)}{2}\lambda_{S}^{n-2}({\rm d}\lambda_{S})^2$, and substitute ${\rm d}\lambda_{S}$ and $({\rm d}\lambda_{S})^2$ from the SDEs 
in \eqref{sde_formu1}. The resulting right-hand side is a polynomial in $\lambda_{S}^{l_0}\boldsymbol{\lambda}^{\bm{l}}$. 

\item 
Keep differentiating all new terms of the form $\lambda_{S}^{l_0}\boldsymbol{\lambda}^{\bm{l}}$ and 
substitute ${\rm d}\lambda_{S}$ along with the SDEs in \eqref{sde_formu1}, until all terms in the right-hand side of the equations are known. 

\item Average 
the equations for ${\rm d}(\lambda_{S}^n)$ and ${\rm d}(\lambda_{S}^{l_0}\boldsymbol{\lambda}^{\bm{l}})$; a set of ODEs are obtained, whose solution yields $<\lambda_{S}(t)^n>$. This set of ODEs is of finite dimension as a consequence of linearity of the SDEs in \eqref{sde_formu1}.

\end{enumerate}


\subsection{Theory validation}

We first applied our theory to an exemplar biochemical network that composes four zero/first-order reactions upstream of a second-order reaction:
\begin{equation}\label{exam_sys}
\begin{array}{lll}
\emptyset  &{\xtofrom[\ \;{ a_2(t)}\ ]{\ \;{ a_1(t)}\ }} & S_1 \vspace{1ex},\\
\emptyset  &{\xtofrom[\ \;{ a_4(t)}\ ]{\ \;{ a_3(t)}\ }} & S_2 \vspace{1ex},\\
S_1 + S_2&{\xrightarrow{\ \ \ \;{ a_0(t)}\ \ \ }} &S_0,
\end{array}
\end{equation}

We shall determine the FPT distribution via Eq. (\ref{fpt_aux})

We validated our theoretical results by comparing them to stochastic simulation algorithm (SSA) 
results, which follow from the Gillespie algorithm\cite{gillespie2013perspective,cao2006efficient}. To the best of our knowledge, there are no other exact results for comparison  (our method is the first that provides an exact FPT distribution for biochemical systems with the network structure of \eqref{exam_sys}). Thus, to illustrate the method's effectiveness, we compared our method with another recently developed approximation method, namely, the linear-mapping method (LMA)\cite{cao_linear_2018}, which derives CME solutions of a linearly approximated biochemical network using a mean-field assumption. We chose the LMA as a benchmark because: 1) our method and the LMA method require similar computational time, while moment closure schemes are significantly slower; 2) both our method and the LMA method can be used for biochemical systems with time-varying reaction rates.  

\begin{figure}[ht]
\includegraphics[width=0.95\linewidth]{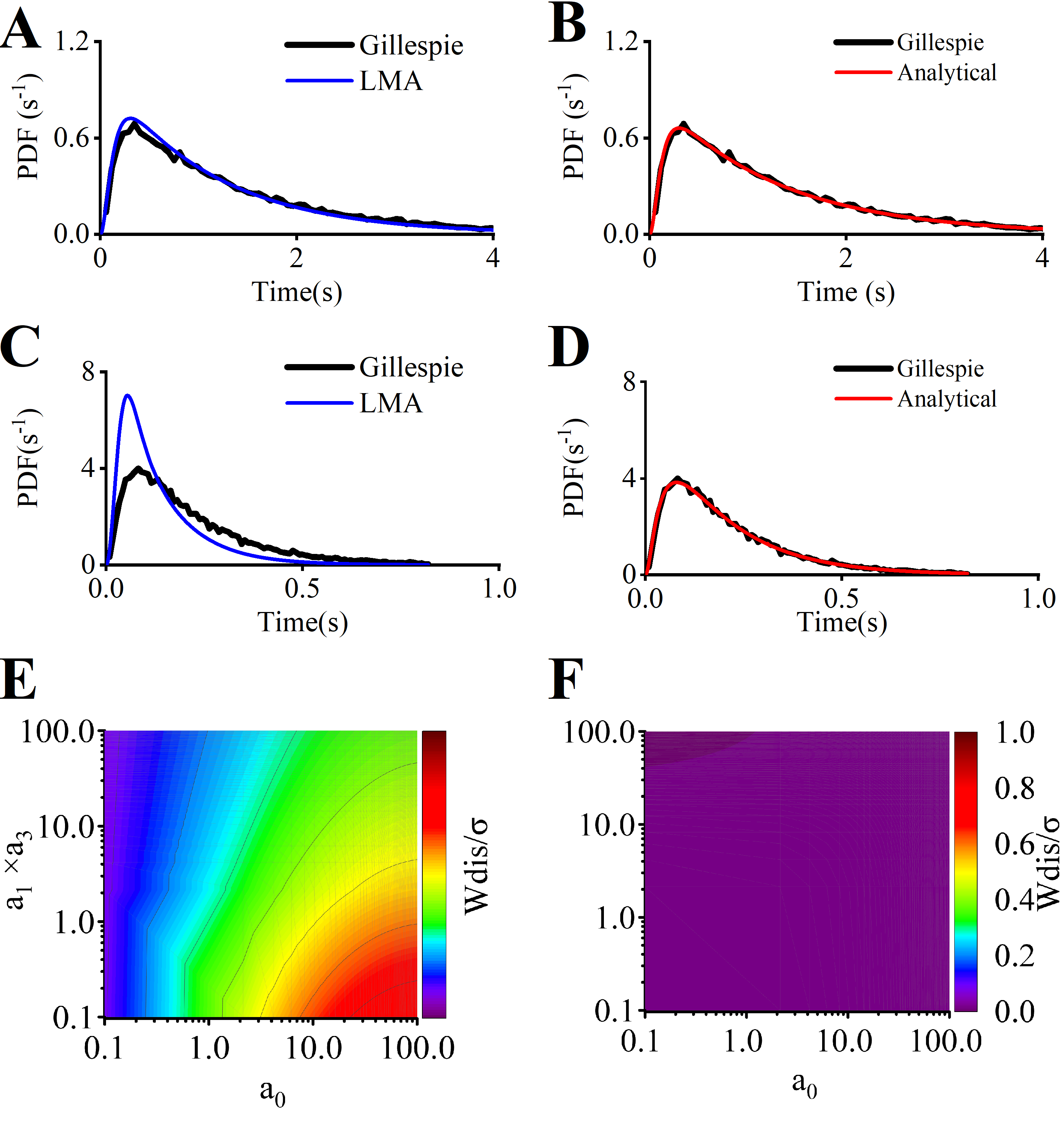}
\caption{\label{fig:tomo_fixedrate} For networks with time-constant reaction rates, the exact method for the FPT distribution, given in the present work, is more accurate and robust than that gained with the LMA method. The LMA (A) and our methods (B) can provide accurate FPT distributions within specific parameter ranges. However, for networks with fast reaction rates or large numbers of reactant molecules in the second-order reaction, our method provides much more accurate FPT distributions (C\&D), and the accuracy is across wide parameter ranges (E\&F). }
\end{figure}

Our method is more accurate and robust than the LMA method, across a broader range of parameters for the network in \eqref{exam_sys} with time-constant reaction rates. The LMA and our method can provide accurate FPT distributions in some parameter ranges (Fig.\ref{fig:tomo_fixedrate} A\&B). However, as the second-order reaction goes faster or with more reactant molecules, our method can be more accurate (Fig.\ref{fig:tomo_fixedrate} D vs. Fig.\ref{fig:tomo_fixedrate} C and Fig.\ref{fig:tomo_fixedrate} F vs. Fig.\ref{fig:tomo_fixedrate} E ). We used the normalized Wasserstein distance (W-distance) to measure the error between the SSA simulated and the analytical FPT distributions gained from the LMA method or our method. The W-distance is normalized against the standard deviation of the SSA-simulated FPT distribution so that the normalized W-distance is dimensionless, allowing comparisons through various time scales. The heat map in Fig. \ref{fig:tomo_fixedrate} E\&F were fitted upon 49 error points, obtained from the results of a ($7 \times 7$) set of parameters. 

We further applied our method to the network in \eqref{exam_sys} with time-varying reaction rates. We restricted consideration to a second-order reaction rate ($a_0$) that varies with time in a sinusoidal fashion. We tested four cases, where $a_0$ can be small (Fig. \ref{fig:time_var} A-B) or large (Fig. \ref{fig:time_var} C-D), or the \textit{rate} of $a_0$ can be fast (Fig. \ref{fig:time_var} A\&C) or slow (Fig. \ref{fig:time_var} B\&D). Again, with larger $a_0$ values, our method is more accurate (Fig. \ref{fig:time_var} C-D), whereas the rate at which $a_0$ changes has a slight influences on accuracy (Fig. \ref{fig:time_var} A\&C vs. Fig. \ref{fig:time_var} B\&D). 

\begin{figure}[ht]
\includegraphics[width=0.95\linewidth]{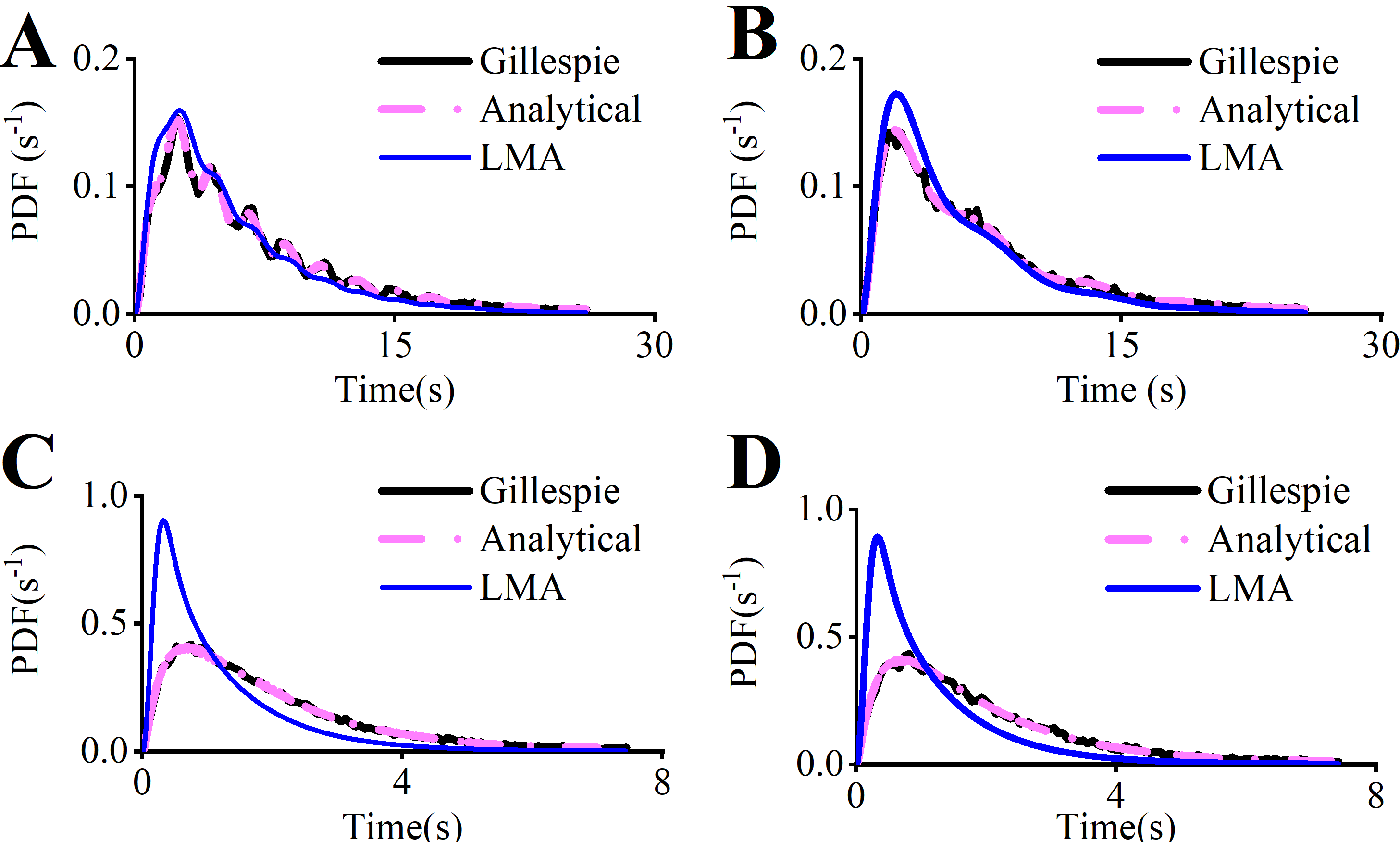}
\caption{\label{fig:time_var} For biochemical networks with time-varying reaction rates, the exact method for the FPT distribution, given in the present work, is more accurate and robust than the LMA method. A. $a_0$ is small, and changes quickly following sin(4t); B. $a_0$ is small, and changes slower following sin(t); C. $a_0$ is large, and changes quicker following sin(4t); D. $a_0$ is large, and changes slower following sin(t).} 
\end{figure}

\subsection{Applications to real biochemical networks}

Our method can be applied to various biochemical systems in different fields. Here, we demonstrate two applications in genetic regulation networks (GRNs)\cite{cao_linear_2018} and a multistep reaction pathway in Ras activation by a protein, called Son Of Sevenless (SOS)\cite{huang2021relating}. 

\vspace{1em}
\noindent \textbf{FPT distribution of a GRN}: \textit{Gene expression} is a fundamental process allowing organisms to create life machinery\cite{chaffey2003alberts}. It is a highly regulated process that involves 
the coordinated action of regulatory proteins that bind to specific DNA sequences to activate or repress gene transcription\cite{orphanides2002unified}. These genetic regulation networks are crucial in cell differentiation, development, and disease\cite{davidson2005gene}. Two decades of research have shown that gene-gene or protein-gene interactions are inherently stochastic, leading to cell-cell variations in mRNA and protein levels\cite{thattai2001intrinsic,swain2002intrinsic}. Thus, analyzing the stochastic timings of the GRNs can be important for understanding cellular phenomena and their function\cite{shahrezaei2008analytical}.  

Here, we analyze a simple GRN system (Fig.\ref{fig:grn} A), which has only one gene that can be in two states, inactive ($G$) and active ($G^*$)\cite{jia2024holimap}. A protein $P$, which is generated and degraded dynamically, 
can bind to $G$ to activate it, and the activated gene, $G^*$, may become deactivated after some time. The GRN involves three chemical reactions, including a second-order reaction, as shown in \eqref{grn}. Starting with some substances that can generate protein $P$ and hence $G^*$, we ask at what time does the protein $P$ activate the gene? 

\begin{equation}
\label{grn}
\begin{array}{lll}
\emptyset&\xtofrom[a2]{a1} &P\\
G^*&{\xrightarrow{\ \ a_3\ \ }} &G,\\
G +P&{\xrightarrow{\ \ a_0\ \ }} &G^*,\\
\end{array}
\end{equation}

To answer this question, we must derive the FPT distribution of the second-order reaction,  in \eqref{grn}. We applied both the LMA method and our method to solve this problem. Our method is accurate across a wide range of parameters (Fig.\ref{fig:grn} B), whereas the LMA can lead to significant errors when the reaction rate of the second-order reaction is high (Fig.\ref{fig:grn} C). 

\begin{figure}[ht]
\includegraphics[width=0.95\linewidth]{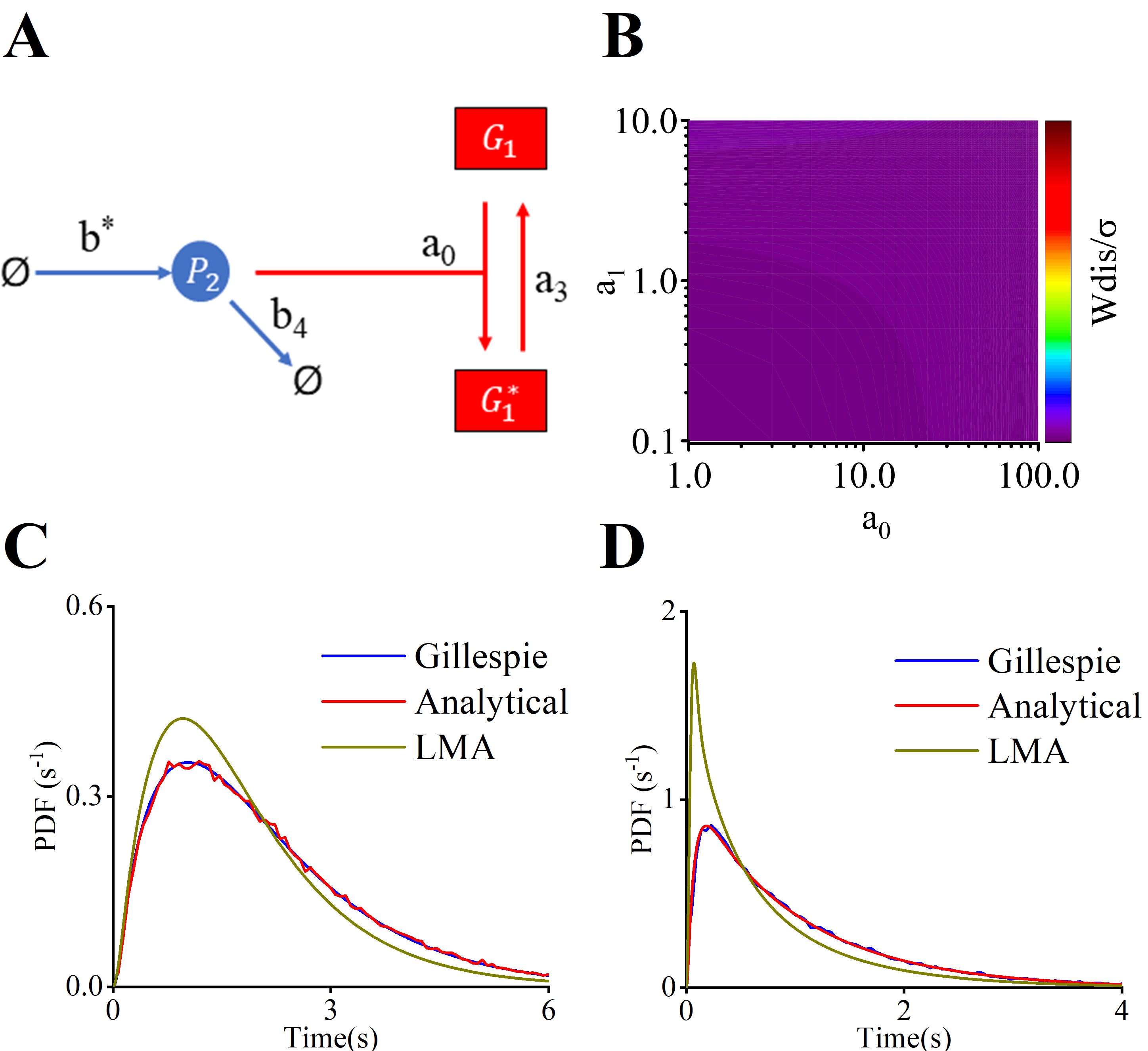}
\caption{\label{fig:grn} For a gene regulatory network (GRN), our exact method derives a much more accurate FPT distribution than the LMA method  A. A simple GRN, where binding a protein $P$ can activate a gene $G$. B. Our exact method accurately derives the FPT distributions across a broad range of parameter sets. The FPT distributions derived from our theoretical method match the SSA-simulated FPT distributions so well that the WD distances between the two stay low across the whole landscape of many parameters. C. The FPT distribution is calculated for a particular parameter set where the second-order reaction is fast ($a_0=1$,$a_3=1$,$b^*=10$, $b_4=1$), showing that our exact method is much more accurate than the LMA method. D. ($a_0=100$,$a_3=1$,$b^*=10$, $b_4=1$)}
\end{figure}

\vspace{1em}

\noindent \textbf{FPT distribution of a multistep reaction pathway}: A \textit{second application} is a multistep reaction pathway, which is present in many biological and chemical processes, such as enzymatic reactions, the 
folding and unfolding of RNA molecules, and the conformation changes of ion channels\cite{zhou2007kinetic}. 

We chose a specific multistep reaction pathway\cite{huang2021relating}, the Ras activation by SOS (Fig. \ref{fig:temp_mul}), to illustrate the effectiveness of our method. SOS is a Ras guanine nucleotide exchange factor (GEF) that plays a central role in numerous cellular signaling pathways, such as the epidermal growth factor receptor and T-cell receptor signaling. 

SOS is autoinhibited in cytosol and activates only after recruitment to the membrane. There are two phases in the activation (Fig. \ref{fig:temp_mul}A) the release of autoinhibition at the membrane through several membrane-mediated intermediates, by a sequence of first-order chemical reactions and 2) the binding of Ras at the allosteric site of SOS by a second-order reaction, which enables the activation of Ras\cite{huang2021relating}. What is the time when Ras is activated, i.e., what is the FPT distribution of the second-order reaction of \eqref{mul}?  

A recent study analyzed the FPT distribution of the first phase of Ras activation\cite{huang2021relating}. It revealed how a faster Ras activation timescale is possible by using much slower activating SOS molecules through multistep reactions. The study resolves the odd discrepancy between the long timescale of individual SOS molecules and the much shorter timescale of Ras activation. More importantly, it demonstrates how rare and early SOS activation events dominate the macroscopic reaction dynamics, implying that the full FPT distribution is required for understanding of this phenomenon, rather than just the mean.

Nevertheless, the analysis presented in\cite{huang2021relating} only focused on first-order multistep reactions. In contrast, our method can include the second-order Ras binding phase of the activation process, pushing the analysis one step further. Without loss of generality, we used two steps in the multistep reaction phase of the SOS activation. The system of reactions is represented in \eqref{mul}, where protein $S_0$ first needs to change to $S_1$ and then $S_*$ before binding with a protein $R$ to activate Ras. The $S_0$ and $S_1$ proteins degrade, with some time constants, and the activated $R_*$ can deactivate to $R$. 

\begin{figure}[ht]
\includegraphics[width=0.95\linewidth]{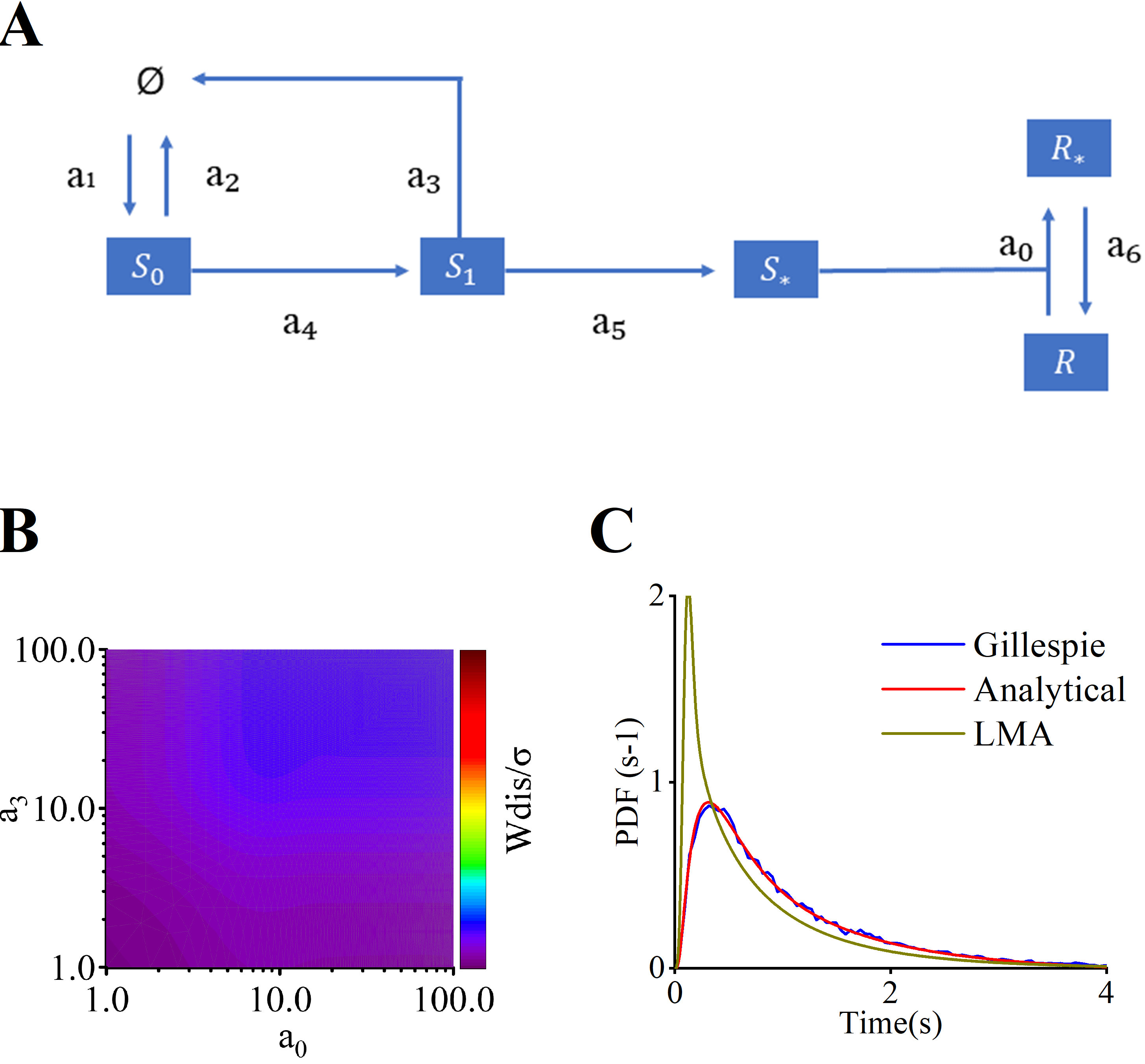}
\caption{Our exact method derives accurate FPT distributions for an exemplar multistep reaction pathway - the Ras activation pathway by SOS. A. A graph illustration of the Ras activation pathway by SOS. B. Our exact method accurately derives the FPT distributions for the Ras activation pathway across various parameter sets. C. The FPT distribution is calculated for a particular parameter set ($a_0=a_3=100$, $a_1=a_2=a_4=10$,$a_5=1$), where the LMA induces a largely inaccurate FPT distribution.
\label{fig:temp_mul} }
\end{figure}

\begin{equation}\label{mul}
\begin{array}{lll}
\emptyset&\xtofrom[a2]{a1} &S_0\\
S_1  &{\xrightarrow{\ \ a_3\ \ }} & \emptyset \vspace{1ex},\\
S_0  &{\xrightarrow{\ \ a_4\ \ }} & S_1 \vspace{1ex},\\
S_1 &{\xrightarrow{\ \ a_5\ \ }} &S_*, \vspace{1ex}\\
R^*&{\xrightarrow{\ \ a_6\ \ }} &R,\\
R+S_2&{\xrightarrow{\ \ a_0\ \ }} &*,\\
\end{array}
\end{equation}

We applied the LMA, along with our methods, to derive the FPT distribution of the binding reaction between $S_*$ and $R$ in \eqref{mul} across a wide range of parameters. We compared the FPT distributions with the SSA-simulated ones and measured the errors by a scaled W-distance. The W-distance is scaled by the standard variable of the SSA-simulated FPT distribution, denoted by $\sigma$. The scale is to make the W-distance a dimensionless quantity, enabling comparisons across systems with different timescales. The errors are all minimal for all parameter sets tested, with the maximum error at a scale of $10\%$ of $\sigma$ (Fig. \ref{fig:temp_mul} B). However, the LMA can induce a largely inaccurate FPT distribution with a W-distance of $90\%$ of $\sigma$ (Fig. \ref{fig:temp_mul} C), 9 times of error compared to our results. 

\section{Discussion}

In this paper we have derived an exact theoretical result for a general class of biochemical networks that involve nonlinearities caused by two-particle collisions. The networks include a second-order reaction of the $A+B \rightarrow C$ type that is downstream of a series of zero or first-order reactions. We have derived the exact theoretical distribution 
for the earliest time the second-order reaction occurs - the first passage time (FPT) distribution of the second-order reaction.

Exact theoretical first passage time distributions have not, previously, been derived for such a system 
with a broad range of applications. This is due to a lack of time-dependent solutions 
of the distribution for the system's states, as described by a chemical master equation (CME). It is known that solutions of the CME, for general biochemical systems, 
only hold for time-constant, 
linear systems composed only of zero and first-order chemical reactions. Even the simplest two-particle collisions cause strong nonlinearities that hinder development of theoretical solutions of the CME for a general system. Additionally, there are further difficulties 
caused by the time-varying reaction rates. 

However, theoretical results for the solution of the CME have advanced to deal with time-varying reaction rates and second-order reactions for \textit{specific systems}. The most recent theoretical results state that for systems that satisfy a dynamically restricted complex balance condition (DR condition), the solution of the CME maintains a Poisson-Product for all times, from the 
initial time. The solution achieves this by deriving time-varying dynamics of the mean number of each molecular species, 
as described by a set of deterministic ordinary differential equations (ODEs). However, the DR condition requires that all reactant pairs are balanced between reactants and products. This is strongly restrictive, and limits
not only to the form of the reactions but also the reaction rates and initial conditions. 

By contrast, the results we have presented in this work represent a solution of the CME in its entirety, and are free of the 
restrictions of the DR condition. This is achieved by 
allowing the mean molecular numbers to be stochastic processes,
rather than having deterministic dynamics; the solution of the CME is obtained by averaging over the stochasticity. 
Based on the exact solution of the CME, we derived the full distribution of the FPT, and developed a numerical scheme, 
based on Pade-approximants, that allow its rapid computation. 

Apart from demonstrating the theoretical value of our results, we have shown its practical value. 
Our method can accurately compute the FPT distributions of nonlinear systems with $A+B \rightarrow C$ type of second-order reactions. The results are much more accurate than state-of-the-art linear mapping approximation methods (LMA), which we specifically chose
as a comparison, because it is a recent method that can be applied to time-varying systems, as can our results. 
We further demonstrated our method's applicability to various networks, including genetic regulation networks and multi-step reaction pathways. This exhibits the potential of our approach to real-world applications. 

While our analysis represents theoretical progress, it is limited to systems that have only one second-order reaction, and what is derived is the FPT of this particular second-order reaction. Therefore, our analysis cannot be applied to biochemical systems that 
do not conform to the description given in \eqref{apb_reac}. One such system is the widely-used enzymatic process given by
the Michaelis Menten reaction, where a first-order reaction follows a reversible second-order reaction. 
We plan to extend our analysis to such systems. 

Overall, our work is a step forward in the theoretical derivation of exact solutions of full FPT distributions for biochemical networks with second-order reactions. Treating the mean molecular numbers as stochastic, and then averaging the result is, we believe, a new approach to theoretical derivations of the solution of CMEs. Indeed, this approach is the key 
that enabled us to derive theoretical solutions of the CME for more general systems, and this may represent
a new way of analyzing more general nonlinear biochemical networks.

\bibliographystyle{plain}
\bibliography{temp}
 
\appendix
\titleformat{\section}{\Large\bfseries}{}{0.3em}{\appendixname~\Alph{section}.}
\titleformat{\subsubsection}{\small\bfseries}{}{0.3em}{\ \ }
\section{Proof of Theorem 1}\label{apsec:der}
In this appendix we give a proof of Theorem 1.

\subsection{General procedure to solve the CME, assuming a solution of Poisson-product form}

For a general second-order biochemical system, as defined in \eqref{apb_reac}, the CME can be written as \eqref{general_cme_element}:  

\begin{equation}
\label{appeq:general_cme_element}
\displaystyle\frac{{\rm d} P(\bm{X}, t )}{{\rm d} t} \! =\!\sum\limits_{k=0}^{M}\left[ P\left(\bm{X}-\bm{v}_k, t \right) c_k\left(\bm{X}-\bm{v}_k,t\right)\!-\! P(\bm{X}, t ) c_k(\bm{X},t)\right],
\end{equation}
where $P(\bm{X}, t )$ is the probability of the system being in state $\bm{X}$ and time $t$. 
The quantity $c_k(\bm{X},t)$ equals the propensity of the $k$'th reaction at time $t$, $a_k(t)\bm{X}^{\bm{Y}_k}$,  where $a_k$ is the rate constant and, for any vectors with $d$ components, a vector to the power of a
vector is defined by $\mathbf{u}^{\mathbf{v}}\overset{\rm def}{=}\prod^d_{i=1}u_i^{v_i}$, with the convention that $0^0=1$.
The quantity $\bm{v}_k$ is the transition vector of the $k$'th reaction.
\setlength{\FrameRule}{1pt}
\setlength{\FrameSep}{10pt}
\begin{framed}
\noindent A key question is: if the initial condition of \eqref{appeq:general_cme_element} has a Poisson product form:
\begin{equation}
\begin{split}\label{appeq:poi_ini}
P(\bm{X},0)&=\prod_{i=0}^N \frac{\lambda_0(0)^{x_i}}{x_i!}\exp(-\lambda_i(0)),
\end{split}
\end{equation}
then under what condition does the solution of \eqref{appeq:general_cme_element} remain of Poisson product form, i.e. 
\begin{equation}
\begin{split}\label{appeq:der_poi_prod}
P(\bm{X},t)= \prod_{i=0}^N \frac{\lambda_i(t)^{x_i}}{x_i!}\exp(-\lambda_i(t)),
\end{split}
\end{equation}
\end{framed}

We define $$\boldsymbol{\lambda}(t)=[\lambda_{1}(t),\dots
,\lambda_{N}(t)]^\top$$ be a column vector containing the mean of all numbers of the species in $\bm{S}$ at
any time $t$ ($t\geq0$), and $$\boldsymbol{\Lambda}(t)= \left[\begin{array}{c}
     \lambda_{0}(t)  \\
     \boldsymbol{\lambda}(t) 
\end{array}  \right]$$ be a column vector containing the mean of all numbers of the species in $\bm{S^*}$ at
any time $t$ ($t\geq0$).

Proceeding by substituting the Poisson product form of \eqref{appeq:der_poi_prod} into 
 \eqref{appeq:general_cme_element}, the left hand side follows by the Chain rule: 

\begin{equation}
\begin{split}\label{appeq:sub_lef}
\displaystyle\frac{{\rm d} P(\bm{X}, t )}{{\rm d} t}=P(\bm{X},t) \times \sum_{i=0}^N \lambda_i'(t) \left(\frac{\bm{X}!}{(\bm{X}-e_i)!}\bm{\Lambda}(t)^{-e_i}-1 \right),
\end{split}
\end{equation}
where $e_{i}$ is a vector within which only element $i$ is $1$, and all other elements are zero.

The right hand side of \eqref{appeq:general_cme_element} is given by:

\begin{equation}\label{appeq:sub_rig}
\begin{split}
 &\sum\limits_{m=0}^{M}\left[ P\left(\bm{X}-\bm{v}_m, t \right) c_m\left(\bm{X}-\bm{v}_m,t\right)\!-\! P(\bm{X}, t ) c_m(\bm{X},t)\right]\\
&= P(\bm{X},t) \left[ \sum_{i=0}^N K_i(t) \left(\frac{\bm{X}!}{(\bm{x}-e_i)!}\bm{\Lambda}(t)^{-e_i}-1 \right)\right.\\& \left.+ \sum_{i=0}^N\sum_{j=0}^N K_{ij}(t) \left( \frac{\bm{X}!}{(\bm{X}-e_i-e_j)!}\bm{\Lambda}(t)^{-e_i-e_j} -1 \right) \right],\\
\end{split}
\end{equation}
where 
\begin{equation}\label{appeq:kij_def}
\begin{split}
K_i(t)&=\sum_{m:\bm{Y}'_m=e_i}a_m(t)\bm{\Lambda}^{\bm{Y}_m}-\sum_{m:\bm{Y}_m=e_i}a_m(t)\bm{\Lambda}^{\bm{Y}_m},\\
K_{ij}(t)&=\sum_{m:\bm{Y}'_m=e_i+e_j}a_m(t)\bm{\Lambda}^{\bm{Y}_m}-\sum_{m:\bm{Y}_m=e_i+e_j}a_m(t)\bm{\Lambda}^{\bm{Y}_m}.
\end{split}
\end{equation}

For the solution of \eqref{appeq:general_cme_element} to remain Poisson-product form over time, we need to equate the right-hand sides of \eqref{appeq:sub_lef} and \eqref{appeq:sub_rig}. The result is:

\begin{equation}
\begin{split}\label{appeq:sub_riglef}
 &\sum_{i=0}^N \lambda_i'(t) \left(\frac{\bm{X}!}{(\bm{X}-e_i)!}\bm{\Lambda}(t)^{-e_i}-1 \right)\\&=\sum_{i=0}^N K_i(t) \left(\frac{\bm{X}!}{(\bm{X}-e_i)!}\bm{\Lambda}(t)^{-e_i}-1 \right) \\&+\sum_{i=0}^N\sum_{j=0}^N K_{ij}(t) \left( \frac{\bm{X}!}{(\bm{X}-e_i-e_j)!}\bm{\Lambda}(t)^{-e_i-e_j} -1 \right).
\end{split}
\end{equation}

In general, this is a complicated nonlinear problem to solve for the $\lambda_i$. In particular, the left-hand side contains only first-order terms of $\bm{\Lambda}$, whereas the right-hand side contains second-order terms of $\bm{\Lambda}$. 

\subsection{Solution that leads to the dynamical and restricted complex balance (DR) condition}

One form of solution of \eqref{appeq:sub_riglef} occurs when all of the $K_{ij}$ coefficients are zero ($K_{ij} = 0$ for all
$i,j$).
Anderson et al. assumed this and found the condition under which this holds (namely the DR condition)\cite{anderson_time-dependent_2020}.

\begin{framed} 

\noindent \textbf{DR condition:} For any higher-order reaction complexes $\bm{z} \cdot \bm{S}$, where the stoichiometric vector $\bm{z}$ satisfies $\bm{z}\in\mathbb{N}^{N+1}_{\geq 0}$ and $\|\bm{z}\|_1\geq 2$, then the DR condition is:
\begin{equation}
\begin{split}\label{appeq:DR_def}
\sum_{m:y_m=z}a_m\bm{\Lambda}^{\bm{z}}(t)=\sum_{m:y'_m=z}a_m\bm{\Lambda}^{\bm{Y}_m}(t),
\end{split}
\end{equation}
where the sum on the left is over those reactions where $\bm{z} \cdot \bm{S}$ are reactants, and the right is over those reactions where $\bm{z} \cdot \bm{S}$ are products. 
\end{framed}

When $K_{ij} = 0$ for all $i$ and $j$ we can write Eq. (\ref{appeq:sub_riglef}) as \newline 
$\sum_{i=1}^{N}(\lambda_{i}^{\prime}(t) -K_{i}(t))\left(  \frac{\bm{x}!}{(\bm{x}-e_{i}%
)!}\bm{\lambda}(t)^{-e_{i}}-1\right) =0$ and a solution arises from 
$\lambda_{i}^{\prime}= K_{i}(t)$.

The dynamics of $\boldsymbol{\Lambda}$ is then governed by a set of deterministic ODEs:
\begin{equation}
\begin{split}\label{appeq:mas_ac}
\frac{\rm d}{\rm d t}\boldsymbol{\Lambda}&=\sum_{m=0}^{M}a_m\boldsymbol{\Lambda}^{\bm{Y}_m}(\bm{Y}_m'-\bm{Y}_m).
\end{split}
\end{equation}

The DR condition says that any higher-order reactant pairs within the system should be balanced. It is very restrictive for the system to satisfy: it requires that the higher-order reactant pairs \textcolor{black}{remain the same} from the reactants to the products and restricts the system's reaction rate coefficients. 

Next, we present our method of solution that does not constrain the system by the DR condition. 

\subsection{{\textcolor{black}{General method of solution  - without requiring the DR condition}}}

In this work we solve \eqref{appeq:sub_riglef} in its entirety, without being restricted by the DR condition. Thus, instead of requiring all $K_{ij}$ to be zero, as applies under a DR condition, we allow 
the sum, that contains the $K_{ij}$, to be present
in \eqref{appeq:sub_riglef}. We incorporate the effects of the sum by 'upgrading' $\bm{\Lambda}$ to be a 
\textit{stochastic process}, which we shall subsequently average over. 

To proceed, we use Ito's rule: $$ {\rm d}f(\lambda)=f'(\lambda){\rm d}\lambda+\frac{1}{2}f''(\lambda)({\rm d}\lambda)^2,$$ when we differentiate the Poisson-product form of \eqref{appeq:sub_lef}, and obtain
\begin{equation}
\begin{split}\label{appeq:subito_lef}
&{\rm d} P(\bm{X}, t )=P(\bm{X},t) \times \left[\sum_{i=0}^N  \left(\frac{\bm{X}!}{(\bm{X}-e_i)!}\bm{\Lambda}(t)^{-e_i}-1 \right){\rm d}\lambda_i(t)\right.\\
&+\left.\frac{1}{2}\sum_{i=0}^N\sum_{j=0}^N  \left(\frac{\bm{X}!}{(\bm{X}-e_i)!}\bm{\Lambda}(t)^{-e_i}-1 \right)\left(\frac{\bm{X}!}{(\bm{X}-e_j)!}\bm{\Lambda}(t)^{-e_j}-1 \right) {\rm d}\lambda_i(t){\rm d}\lambda_j(t)\right].
\end{split}
\end{equation}

\subsubsection*{Determination of the stochastic process, $\bm{\Lambda}$} \ 

Ultimately, we will equate the averaged right-hand side of \eqref{appeq:subito_lef} with 
the right hand side of \eqref{appeq:sub_rig} to determine an SDE that governs the dynamics of the stochastic process, $\bm{\Lambda}$. 


We proceed as follows:  

\begin{enumerate}

\item We assume the following form of the SDE that governs the dynamics of $\boldsymbol{\Lambda}$
(for reasons that will be made clear, shortly):

\begin{equation}
\begin{split}\label{appeq:sde_gen}
{\rm d}\left(\begin{array}{c}
     \boldsymbol{\lambda}  \\
     \lambda_0 
\end{array}\right)=\left(\begin{array}{c}
     \boldsymbol{b}(\boldsymbol{\Lambda},t)  \\
     b_0(\boldsymbol{\Lambda},t) 
\end{array}\right){\rm d}t + \left(\begin{array}{c}
     \bm{\sigma}(\boldsymbol{\Lambda},t)  \\
     \bm{\sigma}_0(\boldsymbol{\Lambda},t) 
\end{array}\right){\rm d}\bm{W}_t,\\
\end{split}
\end{equation}
where $\bm{b}(\boldsymbol{\Lambda},t)$, $b_0(\boldsymbol{\Lambda},t)$ are drift terms, while $\bm{\sigma}(\boldsymbol{\Lambda},t)$ and $\bm{\sigma}_0(\boldsymbol{\Lambda},t)$ are diffusion terms. In the above equation: $\bm{b}(\boldsymbol{\Lambda},t)$ is a N component column vector; $b_0(\boldsymbol{\Lambda},t)$ is a scalar; $\bm{W}_{t}$ is an $2$ component column vector 
of independent Wiener processes; $\bm{\sigma}(\boldsymbol{\Lambda},t)$ is an $N \times 2$ matrix; 
$\bm{\sigma}_0(\boldsymbol{\Lambda},t)$ is a row vector with $2$ elements.

\item We determine $b(\boldsymbol{\Lambda},t)$ and $b_0(\boldsymbol{\Lambda},t)$ by equating them to $K_i(t)$ in \eqref{appeq:sub_rig}, as they are both determined by the zero- and first-order reactions. We obtain:

\begin{equation}
\begin{split}\label{appeq:fz_mas_ac}
&\bm{b}(\boldsymbol{\Lambda},t)=\sum_{m: \|\bm{Y}_m'\|_1\leq 1,\|\bm{Y}_m\|_1\leq 1}a_m\boldsymbol{\lambda}^{\bm{y}_m}(\bm{y}_m'-\bm{y}_m)=\mathcal{M}_1\boldsymbol{\lambda}+\mathcal{M}_2,\\
&b_0(\boldsymbol{\Lambda},t)=a_0\boldsymbol{\lambda}^{\bm{y}_0}(y'_{0,0}-y_{0,0})=a_0\lambda_1\lambda_2.
\end{split}
\end{equation}

\noindent Here $\mathcal{M}_{1}$ is an $N\times N$ matrix with components
$\eta_{i,j}^{1}$, $\mathcal{M}_{2}$ is an $N$ component column vector with
components $\eta_{i}^{2}$, while $y_{m,i}^{\prime}$ and $y_{m,i}$ are the $i$'th
component of $\bm{y}_{m}^{\prime}$ and $\bm{y}_{m}$, respectively, and

\begin{equation}%
\begin{split}
&\eta_{i,j}^{1}=\sum_{m:\bm{y}_{m}=e_{i}}a_{m}(t)(y_{m,j}^{\prime}%
-y_{m,j}),\\&\eta_{i}^{2}=\sum_{m:\bm{y}_{m}=\mathbf{0}}a_{m}%
(t)(y_{m,i}^{\prime}-y_{m,i}).
\end{split}
\end{equation}

\item 
We now explicitly write $\bm{W}_t$ as $\bm{W}_t=[W^1_t,W^2_t]^\top$, then
$\bm{\sigma}(\boldsymbol{\Lambda},t) \cdot {\rm d}\bm{W}_t$ can then be written as: 
\begin{equation}\label{appeq:lam_dif}
\begin{split}
\bm{\sigma}(\boldsymbol{\Lambda},t) \cdot {\rm d}\bm{W}_t& = \sigma_1(\boldsymbol{\Lambda},t){\rm d}W^1_t+ \sigma_2(\boldsymbol{\Lambda},t){\rm d}W^2_t
\end{split}
\end{equation}
where $ \sigma_1(\boldsymbol{\Lambda},t)$ and $ \sigma_2(\boldsymbol{\Lambda},t)$ are $N$ component column vectors.

\noindent We adopted this particular format because the only second-order reactions that occur
within the system are of the type
$A + B\rightarrow C$, with two different reactants. The form of \eqref{appeq:lam_dif} ensures that $K_{ij}$ in \eqref{appeq:sub_rig} satisfies: 

\begin{enumerate} 
\item $K_{ij} = 0$ if $i>2$ or $j>2$, which means that species $S_k$ for $k=3,4\dots$ are not involved in the second-order reaction.

\item $K_{ij} = 0$ if $i=0$ or $j=0$, which means that $S_0$ is not a reactant in any one reaction. It follows that \begin{equation}
    \bm{\sigma}_0(\bm{\Lambda},t)=\bm{0}.
\end{equation}

\item $K_{ij} = 0$ if $i=j$, which means that there are no square terms of the form $\frac{\bm{x}!}{(\bm{x}-2e_i)!}\bm{\lambda}(t)^{-2e_i}$ in the right-hand side of \eqref{appeq:sub_rig}.
\\

When  $\sigma_1(\boldsymbol{\Lambda},t)$ and $\sigma_2(\boldsymbol{\Lambda},t)$ satisfy certain relationships, the non-zero square terms of $\bm{\lambda}$ can be eliminated in the right-hand side of \eqref{appeq:subito_lef} (see below).
\\
\item $K_{12} = K_{21} = -\frac{a_0}{2}\lambda_1\lambda_2$, which means that the matrix $\bm{K} = \{K_{ij}\}$ is symmetric, and its components are all products of different components of $\bm{\Lambda}$. To satisfy this constraint, $\sigma_1(\boldsymbol{\Lambda},t)$ and $\sigma_2(\boldsymbol{\Lambda},t)$ are set as linear functions of $\bm{\lambda}$:

\begin{equation}
\begin{split}\label{appeq:sigma_solu}
&\sigma_1(\boldsymbol{\Lambda},t)= \mathcal{N}_1\boldsymbol{\lambda}\\
&\sigma_2(\boldsymbol{\Lambda},t)=\mathcal{N}_2\boldsymbol{\lambda},\\
\end{split}
\end{equation}
where $\mathcal{N}_1$ and $\mathcal{N}_2$ are $N \times N$ square matrices. 

\end{enumerate}

\vspace{1em}

\item We determine the matrices $\mathcal{N}_1$ and $\mathcal{N}_2$, by equating terms of the form $\frac{\bm{X}!}{(\bm{X}-e_i-e_j)!}\bm{\Lambda}(t)^{-e_i-e_j}$ in \eqref{appeq:subito_lef} with the $K_{ij}$ terms in \eqref{appeq:sub_rig}. This leads to an algebraic equation, the solution of which sets the forms of $\mathcal{N}_1$ and $\mathcal{N}_2$: 
\begin{equation}
    \frac{1}{2}\left(\sigma_1(\boldsymbol{\Lambda},t)\cdot \sigma_1(\boldsymbol{\Lambda},t)^\top+\sigma_2(\boldsymbol{\Lambda},t)\cdot \sigma_2(\boldsymbol{\Lambda},t)^\top\right)=\bm{K},
\end{equation}
where $\bm{K}$ is the matrix with elements $K_{ij}$ ($i\geq 1$, $j\geq 1$) in \eqref{appeq:sub_rig}. $\mathcal{N}_1$ and $\mathcal{N}_2$ are determined as that shown in \eqref{n1n2_def}.

\end{enumerate}

At this point, the dynamics of $\lambda_0$ and $\bm{\lambda}$ are determined by: 

\begin{equation}\label{appeq:bmlambda_def}
\begin{split}
    &{\rm d}\boldsymbol{\lambda}=\left(\mathcal{M}_1\boldsymbol{\lambda}+\mathcal{M}_2\right){\rm d}t + \mathcal{N}_1\boldsymbol{\lambda}{\rm d}W^1_t+\mathcal{N}_2\boldsymbol{\lambda}{\rm d}W^2_t,\\
&{\rm d}\lambda_0=\lambda_1\lambda_2{\rm d}t.
\end{split}
\end{equation}
However, on substituting \eqref{appeq:bmlambda_def} into \eqref{appeq:subito_lef}, terms are generated in the right-hand side of \eqref{appeq:subito_lef} but are absent in \eqref{appeq:sub_rig}. These terms are $\lambda_1\lambda_2\frac{\bm{x}!}{(\bm{x}-e_1)!}\bm{\lambda}(t)^{-e_1}$ and $\lambda_1\lambda_2\frac{\bm{x}!}{(\bm{x}-e_2)!}\bm{\lambda}(t)^{-e_2}$. 

To eliminate these terms in the right-hand side of \eqref{appeq:subito_lef}. We modified the Poisson-product form to $\tilde{P}(\textbf{X},t)$,
as given by

\begin{equation}
\begin{split}\label{appeq:prod_mod}
\tilde{P}(\textbf{X},t)=\frac{\boldsymbol{\Lambda}(t)^{\textbf{X}}}{\textbf{X}!}\exp(-\boldsymbol{\Lambda}(t))\exp(\lambda_0(t)+\lambda_{S}(t)),\\
\end{split}
\end{equation}
where $\lambda_S(t)$ is a stochastic process to be determined.

Note that $\tilde{P}(\textbf{X},t)$ is \textit{not} normalized to unity, and thus is not a probability distribution. 
However, we will prove later that its average, $<\tilde{P}(\textbf{X},t)>$, is a distribution; it is the solution of the CME 
and obeys the Poisson-product form initial condition of \eqref{appeq:poi_ini}.

\subsubsection*{Determination of $\lambda_S$} \ 


Changing the Poisson product form from $P(\bm{X},t)$ to $\tilde{P}(\bm{X},t)$ does not change the format of \eqref{appeq:sub_rig}:

\begin{equation}\label{appeq:sub_rig1}
\begin{split}
&\sum\limits_{m=0}^{M}\left[ \tilde{P}\left(\bm{X}-\bm{v}_m, t \right) c_m\left(\bm{X}-\bm{v}_m,t\right)\!-\! \tilde{P}(\bm{X}, t ) c_m(\bm{X},t)\right]{\rm d}t\\
 &=\tilde{P}(\bm{X},t) \left[ \sum_{i=0}^N K_i(t) \left(\frac{\bm{X}!}{(\bm{X}-e_i)!}\bm{\Lambda}(t)^{-e_i}-1 \right)\right. \\&\left.+ \sum_{i=0}^N\sum_{j=0}^N K_{ij}(t) \left( \frac{\bm{X}!}{(\bm{X}-e_i-e_j)!}\bm{\Lambda}(t)^{-e_i-e_j} -1 \right) \right]{\rm d}t\\
\end{split}
\end{equation}
where $K_i(t)$ is the drift term of ${\rm d}\lambda_i$.

However, when we differentiate $\tilde{P}(\bm{X},t)$ using Ito's rule, \eqref{appeq:subito_lef} is changed to a more complex format, as in \eqref{appeq:subito_dec1}.

Setting $I_{e}=\exp(-\bm{\Lambda})$ and $I_{p}=\frac{\bm{\Lambda}^{\bm{X}}}{\bm{X}!}$ allows us to write
$\tilde{P}(\bm{X},t)= I_{e}I_{p}\exp(\lambda_0(t)+\lambda_{S}(t))$. The quantity ${\rm d} \tilde{P}(\bm{X}, t )$ then becomes:

\begin{equation}
\begin{split}\label{appeq:subito_dec1}
&{\rm d} \tilde{P}(\bm{X}, t )=\left[I_{p}{\rm d}\left(I_{e}\exp(\lambda_0(t)+\lambda_{S}(t))\right)+I_{e}\exp(\lambda_0(t)+\lambda_{S}(t)){\rm d}I_{p}\right]+\exp(\lambda_0(t)){\rm d}I_{p}{\rm d}\left(I_{e}\exp(\lambda_{S}(t))\right)\\
&=\tilde{P}(\bm{X},t) \times \left[\sum_{i=0}^N  K_i(t)\left(\frac{\bm{X}!}{(\bm{X}-e_i)!}\bm{\Lambda}(t)^{-e_i}-1\right){\rm d}t\right.\\
&+\sum_{i=0}^N\sum_{j=0}^N K_{ij}(t) \left( \frac{\bm{X}!}{(\bm{X}-e_i-e_j)!}\bm{\Lambda}(t)^{-e_i-e_j} -1 \right){\rm d}t\\
&+a_S\left[x_1-x_2\right]{\rm d}W^1_t+{\rm i}a_S\left[x_1+x_2\right]{\rm d}W^2_t\\
&-\left( a_S \lambda_1-a_S \lambda_2 \right){\rm d}W^1_t - \left( {\rm i}a_S \lambda_1+{\rm i}a_S \lambda_2 \right){\rm d}W^2_t\\
&\left.+{\rm d}\lambda_{S}+\frac{1}{2}({\rm d}\lambda_{S})^2+\sum_{i=1}^2{\rm d}\lambda_{i}{\rm d}\lambda_{S}-a_0\lambda_1\lambda_2{\rm d}t\right]\\
&+\exp(\lambda_0(t)){\rm d}I_{p}{\rm d}\left(I_{e}\exp(\lambda_{S}(t))\right),
\end{split}
\end{equation}
where
\begin{equation}
\begin{split}\label{appeq:subito_dec4}
&{\rm d}I_{p}{\rm d}\left(I_{e}\exp(\lambda_{S}(t))\right)
=\tilde{P}(\bm{X},t) \times \left[a_0  \left(\lambda_2x_1+\lambda_1x_2 \right) {\rm d}t\right.\\&\left.
+\frac{1}{2}\left(\frac{x_1}{\lambda_1}{\rm d}\lambda_1(t){\rm d}\lambda_S(t)+\frac{x_2}{\lambda_2}{\rm d}\lambda_2(t){\rm d}\lambda_S(t) \right) 
\right].
\end{split}
\end{equation}

We then determine $\lambda_{S}$ by equating the coefficients of the ${\rm d}t$ terms on the right-hand sides of \eqref{appeq:sub_rig1} and \eqref{appeq:subito_dec1}. This leads to constraints that $\lambda_S$ has to satisfy: 

\begin{enumerate}
    \item ${\rm d}\lambda_S$ is a function of ${\rm d} \bm{W}_t$, as \eqref{appeq:sub_rig1} and \eqref{appeq:subito_dec1}  already have 
    corresponding ${\rm d} t$ terms

    \item setting ${\rm d}\lambda_S = \xi_1(\bm{\lambda},t){\rm d}W^1_t + \xi_2(\bm{\lambda},t){\rm d}W^2_t$ 
    allows ${\rm d}I_{p}{\rm d}\left(I_{e}\exp(\lambda_{S}(t))\right)$ to be zero, i.e., the right hand side of \eqref{appeq:subito_dec4} is zero: 
    
\begin{equation}\label{appeq:ls_cons1}
     a_0  \left(\lambda_2x_1+\lambda_1x_2 \right) {\rm d}t
+\frac{1}{2}\left(\frac{x_1}{\lambda_1}{\rm d}\lambda_1(t){\rm d}\lambda_S(t)+\frac{x_2}{\lambda_2}{\rm d}\lambda_2(t){\rm d}\lambda_S(t) \right) =0  
    \end{equation}

    \item ${\rm d}\lambda_S$ should be a quadratic function of ${\rm d}\lambda_S$:

\begin{equation}\label{appeq:ls_cons2}
    \begin{split}
        \frac{1}{2}({\rm d}\lambda_{S})^2+\sum_{i=1}^2{\rm d}\lambda_{i}{\rm d}\lambda_{S}-a_0\lambda_1\lambda_2{\rm d}t=0.
    \end{split}
\end{equation} 

\end{enumerate}

 \eqref{appeq:ls_cons1} and \eqref{appeq:ls_cons2} determine that the solution of ${\rm d}\lambda_S$ is the sum of the diffusion terms of ${\rm d}\lambda_1$ and ${\rm d}\lambda_2$:

\begin{equation}
\begin{split}\label{appeq:sde_formu2}
{\rm d}\lambda_{S}&= \left( a_S \lambda_1-a_S \lambda_2 \right){\rm d}W^1_t + \left( {\rm i}a_S \lambda_1+{\rm i}a_S \lambda_2 \right){\rm d}W^2_t.
\end{split}
\end{equation}

The introduction of $\lambda_{S}(t)$ allows the elimination of the cross-product terms in \eqref{appeq:subito_dec4}, 
corresponding to terms of the form $ \lambda_1(t)\lambda_2(t)\frac{\bm{X}!}{(\bm{X}-e_1)!}\bm{\Lambda}(t)^{-e_1} $ and $ \lambda_1(t)\lambda_2(t)\frac{\bm{X}!}{(\bm{X}-e_2)!}\bm{\Lambda}(t)^{-e_2} $ in \eqref{appeq:subito_lef}. 

We now substitute $\lambda_S$ into \eqref{appeq:subito_dec1} to obtain 

\begin{equation}
\begin{split}\label{appeq:subito_dec2}
&{\rm d} \tilde{P}(\bm{X}, t )=\tilde{P}(\bm{X},t) \times \left[\sum_{i=0}^N  K_i(t)\left(\frac{\bm{X}!}{(\bm{X}-e_i)!}\bm{\Lambda}(t)^{-e_i}-1\right){\rm d}t\right.\\
&+\sum_{i=0}^N\sum_{j=0}^N K_{ij}(t) \left( \frac{\bm{X}!}{(\bm{X}-e_i-e_j)!}\bm{\Lambda}(t)^{-e_i-e_j} -1 \right){\rm d}t\\
&+\left.a_S\left[x_1-x_2\right]{\rm d}W^1_t+{\rm i}a_S\left[x_1+x_2\right]{\rm d}W^2_t\right]\\
\end{split}
\end{equation}

To prove Theorem \ref{theo1}, we needed to equate the right-hand sides of \eqref{appeq:sub_rig1} and \eqref{appeq:subito_dec2}, which means that the stochastic diffusion terms involving ${\rm d}W_t^1$ and ${\rm d}W_t^2$ should be eliminated. Next, we prove that the average of $\tilde{P}(\bm{X},t)$ eliminates these stochastic terms, and $<\tilde{P}(\bm{X},t)>$ satisfies \eqref{appeq:sub_rig1}, and hence is the solution of the CME in \eqref{appeq:general_cme_element}.

\subsubsection*{Eliminating stochastic diffusion terms in \eqref{appeq:subito_dec2}}\ 

Averaging both sides of \eqref{appeq:subito_dec2} yields

\begin{equation}
\begin{split}\label{appeq:subito_dec2_average}
& {\rm d}<\tilde{P}(\bm{X}, t )> =\left<\tilde{P}(\bm{X},t) \times \left[\sum_{i=0}^N  K_i(t)\left(\frac{\bm{X}!}{(\bm{X}-e_i)!}\bm{\Lambda}(t)^{-e_i}-1\right){\rm d}t\right.\right.\\
&+\left.\left.\sum_{i=0}^N\sum_{j=0}^N K_{ij}(t) \left( \frac{\bm{X}!}{(\bm{X}-e_i-e_j)!}\bm{\Lambda}(t)^{-e_i-e_j} -1 \right){\rm d}t \right]\right> \\
&+\left<a_S\tilde{P}(\bm{X},t)\left[x_1-x_2\right]{\rm d}W^1_t+{\rm i}a_S\tilde{P}(\bm{X},t)\left[x_1+x_2\right]{\rm d}W^2_t\right>
\end{split}
\end{equation}

Set $\tilde{\sigma}_1(\bm{X},\bm{\lambda},t)$  = $a_S\tilde{P}(\bm{X},t)\left[x_1-x_2\right]$  and $\tilde{\sigma}_2(\bm{X},\bm{\lambda},t)$ = ${\rm i}a_S\tilde{P}(\bm{X},t)\left[x_1+x_2\right]$, we then prove that: $\left<\tilde{\sigma}_1(\bm{X},\bm{\lambda},\tau){\rm d}W^1_{\tau}+\tilde{\sigma}_2(\bm{X},\bm{\lambda},\tau){\rm d}W^2_{\tau}\right> = 0$ (see Proposition \ref{prop:tlim}), and hence the stochastic diffusion terms in \eqref{appeq:subito_dec2_average} are eliminated.  

\begin{proposition}\label{prop:tlim}
Define $t_n$ be a series of stopping times $t_n={\rm inf}\{\tau\,|\,\|\boldsymbol{\lambda}(\tau)\|\geq n\}$ for $n=1,2,\dots$.

Then, for any $t\in[0,t_n]$, the following expectation is zero:
    \begin{equation}\label{appeq:dif_eli}
    \begin{split}       \left<\tilde{\sigma}_1(\bm{X},\bm{\lambda},t){\rm d}W^1_{t}+\tilde{\sigma}_2(\bm{X},\bm{\lambda},t){\rm d}W^2_{t}\right>=0,
    \end{split}
\end{equation}
\end{proposition}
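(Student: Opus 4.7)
The plan is to interpret the differential equality as the statement that the corresponding It\^o integrals
\[
\int_0^t \tilde{\sigma}_1(\bm{X},\boldsymbol{\lambda}(s),s)\,{\rm d}W^1_s
\quad\text{and}\quad
\int_0^t \tilde{\sigma}_2(\bm{X},\boldsymbol{\lambda}(s),s)\,{\rm d}W^2_s
\]
are martingales on $[0,t_n]$ starting at $0$, so their expectations vanish. This is a standard consequence of the It\^o isometry once we verify that the integrands are adapted and square-integrable on $[0,t_n]$; the whole purpose of introducing the localising stopping time $t_n=\inf\{\tau:\|\boldsymbol{\lambda}(\tau)\|\geq n\}$ is precisely to supply this integrability bound.

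First, I would unpack the integrands. Each $\tilde{\sigma}_i$ is a product of a polynomial factor in $x_1,x_2$ (independent of the path) and the Poisson-product factor
\[
\tilde{P}(\bm{X},t)=\frac{\boldsymbol{\Lambda}(t)^{\bm{X}}}{\bm{X}!}\exp\bigl(-\boldsymbol{\Lambda}(t)\bigr)\exp\bigl(\lambda_0(t)+\lambda_S(t)\bigr).
\]
Adaptedness is immediate because $\boldsymbol{\lambda},\lambda_0,\lambda_S$ are all constructed as solutions of the SDEs in \eqref{sde_formu1}--\eqref{sde_formu2}, which are driven by $W^1,W^2$. For the required moment bound I would argue, on $[0,t_n]$: (i) $\|\boldsymbol{\lambda}(s)\|\leq n$ by definition of the stopping time, so $\boldsymbol{\Lambda}(s)^{\bm{X}}$ and $\exp(-\boldsymbol{\Lambda}(s))$ are uniformly bounded; (ii) $\lambda_0(s)=\int_0^s a_0(u)\lambda_1\lambda_2\,{\rm d}u$ is bounded by a deterministic constant depending on $n$ and $t$; (iii) the coefficients of ${\rm d}W^1,{\rm d}W^2$ in ${\rm d}\lambda_S$ are bounded linear functions of $\boldsymbol{\lambda}$, hence $\lambda_S$ is in every $L^p$ on $[0,t_n]$, and $\exp(\lambda_S)$ admits a uniform $L^2$ bound via the standard exponential martingale/Novikov-type estimate (or a direct Gaussian moment computation, since the diffusion coefficients are bounded). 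Combining these, $E\!\left[\int_0^{t_n} |\tilde{\sigma}_i(\bm{X},\boldsymbol{\lambda}(s),s)|^2\,{\rm d}s\right]<\infty$.

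Once square-integrability is in hand, the It\^o isometry gives $E\!\left[\int_0^t \tilde{\sigma}_i(\bm{X},\boldsymbol{\lambda}(s),s)\,{\rm d}W^i_s\right]=0$ for every $t\in[0,t_n]$, and in differential form this is exactly \eqref{appeq:dif_eli}. Because $\lambda_S$ (and hence $\tilde{\sigma}_i$) is complex-valued through the factors ${\rm i}a_S$ in $\mathcal{N}_2$, I would handle the real and imaginary parts separately: each real-valued integrand is in $L^2$ by the same bounds, and the martingale property applies componentwise.

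The main obstacle will be the moment bound on $\exp(\lambda_S)$, because $\lambda_S$ itself depends stochastically on $\boldsymbol{\lambda}$ with a diffusion term proportional to $\boldsymbol{\lambda}$, and without the stopping time the exponential could fail to be integrable. On $[0,t_n]$, however, the boundedness $\|\boldsymbol{\lambda}\|\leq n$ converts ${\rm d}\lambda_S$ into a bounded-coefficient It\^o process, making $\lambda_S$ Gaussian-like with bounded variance and allowing an explicit bound $E[|\exp(\lambda_S(s))|^2]\leq \exp(C_n s)$ with a constant $C_n$ depending only on $n$ and the bounds on $a_S$. This is the estimate that drives everything, and localising via $t_n$ is what makes it rigorous; passing to $n\to\infty$ at the end (outside the present proposition) would then recover the global statement needed for Theorem~\ref{theo1}.
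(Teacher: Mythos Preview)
Your approach is essentially the same as the paper's: both argue that on $[0,t_n]$ the integrands are well-behaved enough for the It\^o integral $Z(t)=\int_0^t \tilde{\sigma}_1\,{\rm d}W^1_s+\int_0^t\tilde{\sigma}_2\,{\rm d}W^2_s$ to be a true martingale, whence $\langle Z(t)\rangle=Z(0)=0$. Your treatment is in fact more careful than the paper's---the paper simply asserts that $|\tilde{\sigma}_i|$ is \emph{bounded} on $[0,t_n]$ because $\|\boldsymbol{\lambda}\|\leq n$, glossing over the factor $\exp(\lambda_S)$ (with $\lambda_S$ unbounded even after localisation) that you correctly single out as the point requiring an honest $L^2$ estimate.
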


\begin{proof} \textbf{\ref{prop:tlim}}:

From the definition of $t_n$, when $t\in[0,t_n]$, $\|\boldsymbol{\lambda}(\tau)\|$ is bounded, and hence   $|\tilde{\sigma}_1(\bm{X},\bm{\lambda},t)|$ and $|\tilde{\sigma}_2(\bm{X},\bm{\lambda},t)|$ are bounded.

Thus, $Z(t)=\int^t_0\tilde{\sigma}_1(\bm{X},\bm{\lambda},\tau){\rm d}W^1_{\tau}+\tilde{\sigma}_2(\bm{X},\bm{\lambda},\tau){\rm d}W^2_{\tau}$ is a martingale, and according to the properties of martingales, $E(Z(t)) = Z(0)= 0$, which is in equivalent to \eqref{appeq:dif_eli}.

\end{proof}

Proposition \ref{prop:tlim} guarantees that the stochastic diffusion term is eliminated when $t<t_n$. Next, we prove that $<\tilde{P}(\bm{X}, t )> $ is the solution of the CME in \eqref{appeq:general_cme_element}.

\vspace{1em}

\subsubsection*{$<\tilde{P}(\bm{X}, t )> $ is the solution of the CME in \eqref{appeq:general_cme_element}}

We first prove that $<\tilde{P}(\bm{X}, t )> $ is the solution of the CME in \eqref{appeq:general_cme_element} when $t<t_n$, then we generalise the result as $t_n\rightarrow\infty$.

When $ t< t_n$, \eqref{appeq:dif_eli} holds, and $\left<\tilde{P}(\bm{X}, t) \right>$ is the solution of the CME \eqref{appeq:general_cme_element}, because \eqref{appeq:subito_dec2_average} becomes:

\begin{equation}\label{appeq:t_leq_tn_cme}
\begin{split}
{\rm d}<\tilde{P}(\bm{X},t) > = 
   &\left<\tilde{P}(\bm{X},t) \times \left[\sum_{i=0}^N  K_i(t)\left(\frac{\bm{X}!}{(\bm{X}-e_i)!}\bm{\Lambda}(t)^{-e_i}-1\right){\rm d}t\right.\right.\\
&+\left.\left.\sum_{i=0}^N\sum_{j=0}^N K_{ij}(t) \left( \frac{\bm{X}!}{(\bm{X}-e_i-e_j)!}\bm{\Lambda}(t)^{-e_i-e_j} -1 \right){\rm d}t \right]\right> \\
&=\sum\limits_{m=0}^{M}\left[ \left<\tilde{P}\left(\bm{X}-\bm{v}_m, t \right)\right> c_m\left(\bm{X}-\bm{v}_m,t\right)\!-\! \left<\tilde{P}(\bm{X}, t )\right> c_m(\bm{X},t)\right]{\rm d}t\\ 
\end{split}
\end{equation}

Define $t\wedge t_n=\min(t,t_n)$, and denote $\mathcal{P}(\bm{X},t)$ as the solution of the CME in \eqref{appeq:general_cme_element}. \eqref{appeq:t_leq_tn_cme} means that $\mathcal{P}(\bm{X},t\wedge t_n)=\left<\tilde{P}(\bm{X}, t\wedge t_n ) \right>$ at time $ t\wedge t_n$. 

Next, we prove that when $n\rightarrow\infty$, we have the limit $\lim\limits_{n\rightarrow \infty} t\wedge t_n =t$, leading to

\begin{equation}\label{appeq:plim}
\begin{split}
<\tilde{P}(\textbf{X},t)>=\lim\limits_{n\rightarrow \infty}<\tilde{P}(\textbf{X},t\wedge t_n)>=\lim\limits_{n\rightarrow \infty}\mathcal{P}(\textbf{X},t\wedge t_n)=\mathcal{P}(\bm{X},t),
\end{split}
\end{equation}
where the last equal sign stems from $\mathcal{P}(\bm{X},t)$, as the solution of CME \eqref{appeq:general_cme_element}, being a continuous function.

By probability measure, the limit $\lim\limits_{n\rightarrow \infty}t\wedge t_n=t$ is equivalent to 
\begin{equation}\label{appeq:t_leq_tn_pro}
    \lim\limits_{n\rightarrow \infty}{\rm P}\left(t_n\leq t\right)\rightarrow 0.
\end{equation}

From the definition $t_n={\rm inf}\{t\,|\,\|\boldsymbol{\lambda}(t)\|\geq n\}$, \eqref{appeq:t_leq_tn_pro} is equivalent to 
\begin{equation}\label{appeq:t_leq_tn_lamn}
    \lim\limits_{n\rightarrow \infty}{\rm P}\left(\max_{0\leq \tau\leq t}\|\boldsymbol{\lambda}(\tau)\|\geq n\right)\rightarrow 0.
\end{equation}

Next, we use Lemma \ref{app:lem1} to prove \eqref{appeq:t_leq_tn_lamn}.

The linear SDE in \eqref{appeq:bmlambda_def}, that  governs $\boldsymbol{\lambda}$, satisfies the condition of Lemma \ref{app:lem1}, i.e, \eqref{appeq:lem1_cond} because:

1) \eqref{appeq:bmlambda_def} can be rewritten as
\begin{equation}
    {\rm d}\boldsymbol{\lambda}=\bm{b}(\boldsymbol{\Lambda},t){\rm d}t + \bm{\sigma}(\boldsymbol{\Lambda},t){\rm d}\bm{W}_t,
\end{equation}
where $\bm{b}$ is an $N$ column vector of linear functions defined in \eqref{appeq:fz_mas_ac}, and $\bm{\sigma}$  is an 
$N\times 2$ matrix of linear functions defined in \eqref{appeq:lam_dif}.

2) Linearity of $\bm{b}$ and $\bm{\sigma}$ lead to $\| \bm{b}(t,y)\|^2+\| \bm{\sigma}(t,y)\|^2$ being quadratic functions of the vector 
$y\in\mathbb{R}^N$. Thus, \eqref{appeq:lem1_cond} holds, which means

\begin{equation}\label{appeq:lemm1_corr}
    \left<\max_{0\leq \tau\leq t}\|\boldsymbol{\lambda}(\tau)\|^2\right> \leq K(\boldsymbol{\lambda}(0)){\rm e}^{Ct},
\end{equation}

where $C$ is a constant, and $K(\boldsymbol{\lambda}(0))$ is a constant that depends on $\boldsymbol{\lambda}(0)$.

Then by Chebyshev's inequality
\begin{equation}\label{appeq:lemm1_corr1}
\begin{split}
    &{\rm P}\left(\max_{0\leq \tau\leq t}\|\boldsymbol{\lambda}(\tau)\|\geq n\right)\leq {\rm P}\left(\left|\max_{0\leq \tau\leq t}\|\boldsymbol{\lambda}(\tau)\|- \mu\right|\geq n-\mu\right)\\&\leq \frac{\left<\displaystyle\max_{0\leq \tau\leq t}\|\boldsymbol{\lambda}(t)\|^2\right>}{(n-\mu)^2}\leq \frac{K(\boldsymbol{\lambda}(0)){\rm e}^{Ct}}{(n-\mu)^2},  
\end{split} 
\end{equation}
where $\mu=\left<\|\boldsymbol{\lambda}(t_{m})\|\right>$ and $\|\boldsymbol{\lambda}(t_{m})\|=\max_{0\leq \tau\leq t}\|\boldsymbol{\lambda}(\tau)\|$.

Because 
\begin{equation}\label{appeq:lemm1_corr2}
    \mu^2=\left<\|\boldsymbol{\lambda}(t_{m})\|\right>^2\leq \left<\|\boldsymbol{\lambda}(t_{m})\|^2\right>\leq K(\boldsymbol{\lambda}(0)){\rm e}^{Ct},
\end{equation}
we have the limit
\begin{equation}\label{appeq:lemm1_corr3}
\begin{split}
\lim\limits_{n\rightarrow \infty} \frac{K(\boldsymbol{\lambda}(0)){\rm e}^{Ct}}{(n-\mu)^2}=0.
\end{split} 
\end{equation}
Thus, \eqref{appeq:lemm1_corr1} leads to \eqref{appeq:t_leq_tn_lamn}

\begin{lemma}\label{app:lem1}(Problem 5.3.15 of\cite{karatzas_brownian_1996})
    Suppose $b_i(t,y)$ and $\sigma_{ij}(t,y)$; $1\leq i \leq d$, $1\leq j \leq r$, are progressively measurable functionals from $[0,\infty)\times C[0,\infty)^d$ into $\mathbb{R}$ satisfying
    \begin{equation}\label{appeq:lem1_cond}
    \begin{split}
        \| \bm{b}(t,y)\|^2+\| \bm{\sigma}(t,y)\|^2\leq K\left( 1+\|y\|^2\right);&\\
        \forall 0\leq t< \infty,\ y\in \mathbb{R}^d,&
    \end{split}
    \end{equation}
    where K is a positive constant. If $(\bm{X},W)$, $(\Omega,\mathcal{F},P)$, $\{\mathcal{F}_t\}$ is a weak solution to the SDE 
    \begin{equation}
    \begin{split}
        {\rm d}\bm{X}=\bm{b}(t,\bm{X}){\rm d}t+\bm{\sigma}(t,\bm{X}){\rm d}\bm{W}_t,
    \end{split}
    \end{equation}
    with $\left<\|\bm{X}_0\|^{2m} \right><\infty$ for some $m>1$, then for any finite time $T>0$, we have
    \begin{equation}
    \begin{split}
        \left<\max\limits_{0\leq s\leq t}\|\bm{X}_t\|^{2m} \right>\leq C\left( 1+\left<\|\bm{X}_0\|^{2m} \right>\right){\rm e}^{Ct};0\leq t \leq T,
    \end{split}
    \end{equation}
    where C is a positive constant depend only on $m$, $T$, $K$ and $d$.
\end{lemma}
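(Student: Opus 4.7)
The plan is to apply Ito's formula to the auxiliary function $f(x) = (1+\|x\|^2)^m$ along the solution $\bm{X}_t$, use the linear growth hypothesis to control the resulting drift, handle the stochastic integral via the Burkholder-Davis-Gundy (BDG) inequality, and close the argument with Gr\"onwall's inequality. Because the solution is only controlled a posteriori by a polynomial bound, a localization by stopping times is needed up front so that every stochastic integral appearing is a genuine martingale rather than merely a local martingale.

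First I would introduce the localizing stopping times $\tau_n = \inf\{t\geq 0 : \|\bm{X}_t\|\geq n\}$ and work with the stopped process $\bm{X}_{t\wedge\tau_n}$, which is bounded by construction. Applying Ito's formula to $f(\bm{X}_{t\wedge\tau_n})$ produces a drift of the form $m(1+\|\bm{X}\|^2)^{m-1}\bigl[2\langle \bm{X},\bm{b}\rangle + {\rm tr}(\bm{\sigma}\bm{\sigma}^\top)\bigr] + 2m(m-1)(1+\|\bm{X}\|^2)^{m-2}\|\bm{\sigma}^\top\bm{X}\|^2$ together with an Ito integral whose quadratic variation is $4m^2\int_0^{t\wedge\tau_n}(1+\|\bm{X}_s\|^2)^{2m-2}\bm{X}_s^\top\bm{\sigma}\bm{\sigma}^\top\bm{X}_s\,{\rm d}s$. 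Using $\|\bm{b}\|^2+\|\bm{\sigma}\|^2\leq K(1+\|\bm{X}\|^2)$ together with Cauchy-Schwarz bounds the drift by a constant multiple of $(1+\|\bm{X}\|^2)^m$ and the quadratic variation by a constant multiple of $\int_0^{t\wedge\tau_n}(1+\|\bm{X}_s\|^2)^{2m}\,{\rm d}s$.

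Next I would take the supremum over $s\in[0,t]$ and then the expectation. BDG controls the expected supremum of the martingale by a constant times $\left<\sqrt{\int_0^{t\wedge\tau_n}(1+\|\bm{X}_s\|^2)^{2m}\,{\rm d}s}\right>$, which I would factor via Cauchy-Schwarz as the product of $\sup_{s\leq t\wedge\tau_n}(1+\|\bm{X}_s\|^2)^{m/2}$ and an $L^2$-in-time piece. Young's inequality $ab\leq\tfrac{1}{2\varepsilon}a^2+\tfrac{\varepsilon}{2}b^2$ with a suitable $\varepsilon$ then lets me absorb a fraction of the resulting $\left<\sup(1+\|\bm{X}\|^2)^m\right>$ term into the left-hand side, leaving a Gr\"onwall-ready estimate $\left<\sup_{u\leq t\wedge\tau_n}(1+\|\bm{X}_u\|^2)^m\right>\leq C_1\bigl(1+\left<\|\bm{X}_0\|^{2m}\right>\bigr) + C_2\int_0^t\left<\sup_{u\leq s\wedge\tau_n}(1+\|\bm{X}_u\|^2)^m\right>{\rm d}s$. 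Gr\"onwall then yields an exponential bound uniform in $n$.

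Finally I would send $n\to\infty$. The sup on the left is monotone in $n$, and, by Chebyshev's inequality together with the just-obtained uniform bound, ${\rm P}(\tau_n\leq t)\to 0$, so monotone convergence delivers the stated estimate for the original unstopped process. The main obstacle I expect is the bookkeeping in the BDG-plus-Young absorption step; specifically, one must choose the factorization so that the $\sup$ term reappearing on the right is dominated by the $\sup$ term on the left with a coefficient strictly less than one, uniformly in the localization index $n$, and one must verify that the finiteness of $\left<\|\bm{X}_0\|^{2m}\right>$ for $m>1$ is genuinely needed to make the Cauchy-Schwarz factorization work.
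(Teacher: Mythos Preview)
The paper does not actually prove this lemma; it is quoted verbatim as Problem~5.3.15 of Karatzas--Shreve and invoked as a black box in the proof of Theorem~\ref{theo1}. So there is no ``paper's own proof'' to compare against.

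That said, your proposal is the standard argument one would give for this exercise and is correct in outline: localize, apply It\^o to $(1+\|x\|^2)^m$, bound the generator by a multiple of $(1+\|x\|^2)^m$ using the linear-growth hypothesis, control the running supremum of the martingale part via BDG, split the resulting square-root of the quadratic variation as $\sup^{1/2}\cdot(\text{time integral})^{1/2}$, absorb with Young, and close with Gr\"onwall and monotone convergence. One small remark: the condition $m>1$ is not really what makes the Cauchy--Schwarz/Young factorization go through (that step works for any $m\geq 1$); rather, the restriction in Karatzas--Shreve is there so that the second-derivative term $2m(m-1)(1+\|x\|^2)^{m-2}\|\bm{\sigma}^\top x\|^2$ is nonnegative and can be bounded above, and so that the initial moment hypothesis is nontrivial. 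Your identification of the ``main obstacle'' is slightly off in that respect, but it does not affect the validity of the argument.
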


\section{ Why calculate \texorpdfstring{$<\lambda_{S}(t)^n>$ }instead of \texorpdfstring{$<\exp(\lambda_{S})>$}?}\label{apsec:why}

In this appendix we answer why we calculated $<\lambda_{S}(t)^n>$ instead of $<\exp(\lambda_{S})$.

In the main text, we calculated $<\lambda_{S}(t)^n>$ because it can be determined from solution of a finite number of ODEs, 
whereas calculating $<\exp(\lambda_{S})>$ involves solving an infinite-dimensional system of coupled ODEs.

The details are as follows.  

A natural idea to calculate $\left<\exp(\lambda_{S})\right>$ is to derive its governing ODE by differentiating $\exp(\lambda_{S})$:
\begin{equation}
\begin{split}\label{infi_lam_ode1}
{\rm d}\exp(\lambda_{S})=\exp(\lambda_{S}){\rm d}\lambda_{S}+\frac{1}{2}\exp(\lambda_{S})({\rm d}\lambda_{S})^2.
\end{split}
\end{equation} 

Substituting ${\rm d}\lambda_{S}$ and ${\rm d}\lambda_{S}^2$ in \eqref{infi_lam_ode1} with the SDEs in \eqref{sde_formu1}, some polynomial functions of $\boldsymbol{\lambda}$ show up in the right-hand side of \eqref{infi_lam_ode1}. The new polynomial functions have a general format of  $\boldsymbol{\lambda}^{\bm{l}}\exp(\lambda_{S})$, where $\bm{l}=[l_1,\dots,l_N]$ is a vector of positive integer numbers.

To determine terms like $\boldsymbol{\lambda}^{\bm{l}}\exp(\lambda_{S})$, we kept deriving their ODEs by differentiating them. This process results in terms with higher exponents of $\boldsymbol{\lambda}$, i.e. $\boldsymbol{\lambda}^{\bm{l'}}\exp(\lambda_{S})$, where $\|\bm{l'}\|_1>\|\bm{l}\|_1$: 

\begin{align}\label{infi_lam_fhold}
&{\rm d}(\exp(\lambda_{S})\boldsymbol{\lambda}^{\bm{l}})=\exp(\lambda_{S})\boldsymbol{\lambda}^{\bm{l}}{\rm d}\lambda_{S}+l_1\exp(\lambda_{S})\boldsymbol{\lambda}^{\bm{l}-e_1}{\rm d}\lambda_1+l_2\exp(\lambda_{S})\boldsymbol{\lambda}^{\bm{l}-e_2}{\rm d}\lambda_2+\dots\nonumber\\
&+\frac{1}{2}\exp(\lambda_{S})\boldsymbol{\lambda}^{\bm{l}}({\rm d}\lambda_{S})^2+\frac{l_1(l_1-1)}{2}\exp(\lambda_{S})\boldsymbol{\lambda}^{\bm{l}-2e_1}({\rm d}\lambda_1)^2+\dots\\&+ l_1\exp(\lambda_{S})\boldsymbol{\lambda}^{\bm{l}-e_1}{\rm d}\lambda_{S}{\rm d}\lambda_1+l_2\exp(\lambda_{S})\boldsymbol{\lambda}^{\bm{l}-e_2}{\rm d}\lambda_{S}{\rm d}\lambda_2+l_1l_2\exp(\lambda_{S})\boldsymbol{\lambda}^{\bm{l}-e_2-e_2}{\rm d}\lambda_1{\rm d}\lambda_2+\dots\nonumber,
\end{align}

\noindent where $e_j$ ($j=1,2,\dots,N$) is a $N$ dimensional vector, whose $j$'th component equals one, and all other components equal zero. 

It can be seen from the right-hand side of \eqref{infi_lam_fhold}, that the differentiation process always results 
in nonlinear terms with higher exponents of $\boldsymbol{\lambda}$, whose expectations are not zero, e.g. 
the terms like $\exp(\lambda_{S})\boldsymbol{\lambda}^{\bm{l}}{\rm d}\lambda_{S}^2$. 

As a result, the expectations of \eqref{infi_lam_ode1} and \eqref{infi_lam_fhold} compose a set of non-closed-form equations, and directly solving $<\exp(\lambda_{S})>$ is dogged by infinite-dimensional ODEs.

To avoid solving infinite number of ODEs for $<\exp(\lambda_{S})>$, we approximated it by calculating $<\lambda_{S}(t)^n>$. This is achieved by introducing a new function $H(s,t) = <\exp(s*\lambda_{S})>$, and one may approximate $H(s,t)$ by its Taylor expansions: 

\begin{equation}
\begin{split}\label{pade_taylor}
H(s,t)\approx T_{\tilde{N}}(s,t)=\frac{s^n}{n!} \times\frac{\partial^n}{\partial s^n}H(s,t)|_{s=0}
=\sum_{n=0}^{\tilde{N}}\frac{s^n}{n!} <\lambda_{S}(t)^n>. 
\end{split}
\end{equation}

 \eqref{pade_taylor} means that we only need to determine $<\lambda_{S}(t)^n>$ for every $n$ to determine $H(s,t)$, and $H(1,t)$ is the required FPT distribution. 

For practical reasons, we can calculate $<\lambda_{S}(t)^n>$, with $n$ less than a predefined integer, i.e. $n=0,1,2\dots,\tilde{N}$ ($\tilde{N}$ is the highest order of calculated moments). Since $\exp(s\lambda_{S}(t))$ is a transcendental function, we used a Pad\'e approximant to control
the approximation errors. We calculated the Pad\'e approximant via the extended Euclidean algorithm. 

It turns out that calculating $<\lambda_{S}(t)^n>$ is much easier than calculating $\left<\exp(\lambda_{S})\right>$, because the \textit{linearity} of the SDEs in \eqref{sde_formu1} guarantes that $<\lambda_{S}(t)^n>$ is governed by a set of finite dimensional ODEs. 

We derived the ODEs governing $<\lambda_{S}(t)^n>$ through a similar process as that of $\left<\exp(\lambda_{S})\right>$. We
started differentiating $<\lambda_{S}(t)^n>$ by Ito's rule: 
\begin{equation}
\begin{split}\label{fi_lam_fode1_cop}
{\rm d}(\lambda_{S}^n)&=n\lambda_{S}^{n-1}{\rm d}\lambda_{S}+ \frac{n(n-1)}{2}\lambda_{S}^{n-2}({\rm d}\lambda_{S})^2.\\
\end{split}
\end{equation}

Substituting ${\rm d}\lambda_{S}$ and ${\rm d}\lambda_{S}^2$ in \eqref{fi_lam_fode1_cop} with the SDEs in \eqref{sde_formu1}, the right-hand side of \eqref{fi_lam_fode1_cop} would only be polynomial functions of $\lambda_{S}^{l_0}\boldsymbol{\lambda}^{\bm{l}}$. The linearities of the SDEs guarantee that
${\rm d}\lambda_{S}$ can be represented by polynomial functions of $\bm{\lambda}$ with orders no more than one. Likewise, ${\rm d}\lambda_{S}^2$ can be represented with polynomial functions of $\bm{\lambda}$ with orders no more than 2. As a result, the polynomial order of $\lambda_{S}^{l_0}\boldsymbol{\lambda}^{\bm{l}}$ is not higher than $n$, i.e. $l_0+\|\bm{l}\|_1 \le n$. 

Furthermore, the differentiation of $<\lambda_{S}^{l_0}\boldsymbol{\lambda}^{\bm{l}}>$ does not increase its polynomial order. Because of the It\^o's rule, 
\begin{align}\label{fi_lam_ito}
&{\rm d}(\lambda_{S}^{l_0}\boldsymbol{\lambda}^{\bm{l}})=l_0\lambda_{S}^{l_0-1}\boldsymbol{\lambda}^{\bm{l}}{\rm d}\lambda_{S}+l_1\lambda_{S}^{l_0}\boldsymbol{\lambda}^{\bm{l}-e_1}{\rm d}\lambda_1+l_2\lambda_{S}^{l_0}\boldsymbol{\lambda}^{\bm{l}-e_2}{\rm d}\lambda_2\dots\nonumber\\
&+\frac{l_0(l_0-1)}{2}\lambda_{S}^{l_0-2}\boldsymbol{\lambda}^{\bm{l}}({\rm d}\lambda_{S})^2+\frac{l_1(l_1-1)}{2}\lambda_{S}^{l_0}\boldsymbol{\lambda}^{\bm{l}-2e_1}({\rm d}\lambda_1)^2\\&+ l_0l_1\lambda_{S}^{l_0-1}\boldsymbol{\lambda}^{\bm{l}-e_1}{\rm d}\lambda_{S}{\rm d}\lambda_1+l_0l_2\lambda_{S}^{l_0-1}\boldsymbol{\lambda}^{\bm{l}-e_2}{\rm d}\lambda_{S}{\rm d}\lambda_2+l_1l_2\lambda_{S}^{l_0}\boldsymbol{\lambda}^{\bm{l}-e_2-e_2}{\rm d}\lambda_1{\rm d}\lambda_2\dots,\nonumber
\end{align}
where $e_j$ ($j=1,2,\dots,N$) is a $N$ dimensional vector, whose $j$'th component equals one, and all other components equal zero. Similarly, because of the linearity of \eqref{sde_formu1}, the polynomial orders of $\lambda_{S}$ and $\boldsymbol{\lambda}$ of all terms in \eqref{fi_lam_ito} do not increase.

Therefore, \eqref{fi_lam_fode1_cop} and \eqref{fi_lam_ito} compose a set of closed-form equations, and $<\lambda_{S}(t)^n>$ can be solved by a set of finite-dimensional ODEs.

\end{document}